\begin{document}
% !TEX root =  main.tex

\newcommand{\Reals}{\mathbb{R}}
\newcommand{\Nats}{\mathbb{N}}
\newcommand{\Ints}{\mathbb{Z}}
\newcommand{\Rats}{\mathbb{Q}}

\newcommand{\nba}{\textrm{NBA}}
\newcommand{\FGWNBA}{\textrm{WNBA}\ensuremath{_{\textsf{FG}}}\xspace}

\newcommand{\Aut}[1]{\ensuremath{\mathcal{#1}}}
\newcommand{\size}[1]{\ensuremath{|#1|}}

\newcommand{\BigO}[1]{\ensuremath{\mathcal{O}(#1)}}

\newcommand{\bs}[1]{\ensuremath{\overline{#1}}}
\newcommand{\relu}{\ensuremath{\mathit{relu}}}

\newcommand{\words}{\mathcal{W}_\Sigma}
\newcommand{\slice}{\mathbin{:}} 

\newcommand{\enc}{\ensuremath{\mathit{enc}}}
\newcommand{\dec}{\ensuremath{\mathit{dec}}}
\newcommand{\code}{\enc}

\newcommand{\vks}[1]{\ensuremath{|\!|#1|\!|}}

\newcommand{\twos}[2]{\ensuremath{\begin{bmatrix}#1\\#2\end{bmatrix}}}
\newcommand{\threes}[3]{\ensuremath{\begin{bmatrix}#1\\#2\\#3\end{bmatrix}}}

\newcommand{\WF}[2]{\ensuremath{\mathit{WF}_{#1}^{#2}}}

\newcommand{\proj}[2]{\ensuremath{{#1}{\downarrow}_{#2}}}

\newcommand{\join}{\mathrm{join}}
\newcommand{\project}{\mathrm{project}}
\newcommand{\interleave}{\mathrm{interleave}}

\newcommand{\measure}[1]{\ensuremath{|\!| #1 |\!|}}

\newcommand{\build}{\mathrm{build}}

\newcommand{\tofsa}{\mathrm{to\FGWNBA}}

\newcommand{\pwl}{\mathrm{PWL}}

\newcommand{\TODOcomment}[2]{%
        \stepcounter{TODOcounter#1}%
        {\scriptsize\bf$^{(\arabic{TODOcounter#1})}$}%
        \marginpar[\fbox{
                \parbox{2cm}{\raggedleft
                        \scriptsize$^{({\bf{\arabic{TODOcounter#1}{#1}}})}$%
                        \scriptsize #2}}]%
        {\fbox{\parbox{2cm}{\raggedright 
                                \scriptsize$^{({\bf{\arabic{TODOcounter#1}{#1}}})}$%
                                \scriptsize #2}}}
}%

\newcommand{\simpleTODOcomment}[2]{%
        \stepcounter{TODOcounter#1}%
        {\bf
                \scriptsize({\arabic{TODOcounter#1}~{#1}})
                {\bfseries{TODO:} #2}
        }
}

\newcounter{TODOcounter}
\newcommand{\TODO}[1]{\TODOcomment{}{#1}}
\newcommand{\TODOX}[1]{\simpleTODOcomment{}{#1}}

\title{Verifying And Interpreting Neural Networks using Finite Automata}
%
%\titlerunning{Abbreviated paper title}
% If the paper title is too long for the running head, you can set
% an abbreviated paper title here
%
\author{Marco S\"alzer\orcidID{0000-0002-8012-5465} \and
Eric Alsmann\and
Florian Bruse\orcidID{0000-0001-6800-7135}\and
Martin Lange\orcidID{0000-0002-1621-0972}}

\authorrunning{M. S\"alzer et al.}
% First names are abbreviated in the running head.
% If there are more than two authors, 'et al.' is used.
%
\institute{School of Electr.\ Eng.\ and Computer Science, University of Kassel, Germany
\email{\{marco.saelzer,eric.alsmann,florian.bruse,martin.lange\}@uni-kassel.de}}
\maketitle              % typeset the header of the contribution
\begin{abstract}
Verifying properties and interpreting the behaviour of deep neural networks (DNN) is an important
task given their ubiquitous use in applications, including safety-critical ones, and their
black-box nature.
We propose an automata-theoric approach to tackling problems arising in DNN analysis. We show
that the input-output behaviour of a DNN can be captured precisely by a (special) weak B\"uchi 
automaton and we show how these can be used to address common verification 
and interpretation tasks of DNN like adversarial robustness or minimum sufficient reasons.
\end{abstract}

\keywords{neural networks  \and finite state automata  \and verification \and interpretation}

\section{Introduction}
% !TEX root =  ../main.tex

Deep Neural Networks (DNN), trained using task-oriented and precisely crafted techniques, are the driving force of all modern deep learning applications, which have produced astonishing results: highly-developed driving assistants \cite{GrigorescuTCM20_selfdriving}, 
the overcoming of language barriers due to neural machine translation \cite{OtterMK21},
far-reaching support in early disease detection \cite{LitjensKBSCGLGS17}, 
the creation of inspiring art from textual user inputs \cite{RameshPGGVRCS21_dalle,RDNCC22_dalle2}, etc. 

Their striking performance comes with a downside: they are a black box.
While it is easy to describe structure and parameters of a DNN, it is hard to obtain reliable
predictions for or explanation of their behaviour. 
%This black box nature of DNN leads to several challenges. First, 
%It is highly desirable, though, that 
Deep learning techniques need to be reliable, though, especially in safety-critical applications. However, certifying that some DNN satisfies 
specific safety properties, formally called \emph{verifying} these properties, is difficult. The verification of a 
common safety property for a DNN $N$ is informally described by the question 
``is there an input $\bs{x}$ of interest such that the output $N(\bs{x})$ has some unwanted characteristics?'' 
The corresponding decision problem, formally called \emph{output reachability}, is NP-complete \cite{KBDJK17_reluplex}, 
even for completely shallow DNN and simple specifications of relevant inputs and outputs \cite{SalzerL21_npcompl}.
Furthermore, DNN-based applications require comprehensible explanations of the outputs generated by a DNN
due to legal, safety and ethical concerns. There is a need for techniques 
giving understandable explanations for DNN behaviour; this is formally known as \emph{interpreting} DNN. A typical 
interpretation task for some DNN $N$ and an input-output pair $(\bs{x}, N(\bs{x}))$ is to answer the question
``which features of $\bs{x}$ are the relevant ones leading to the output $N(\bs{x})$?''  A corresponding
decision problem, called the \textsc{MinimumSufficientReason} problem, is known to be $\Sigma^P_2$-complete \cite{BarceloM0S20_interprcompl}.

We propose an approach based on finite-state-automata for tackling challenges arising from the black-box nature of DNN.
A DNN $N$ computes a function of type $\Reals^m \rightarrow \Reals^n$ for some $m,n \in \Nats$, 
which induces a relation $R_N \subseteq \Reals^m \times \Reals^n$. Using an appropriate encoding, 
$R_N$ can be represented by a set of infinite words over an alphabet of \emph{$(m+n)$-track symbols} 
of the form $(a_1, \dotsc, a_m, b_1, \dotsc, b_n)$ where $a_i,b_i$ are taken from an alphabet like $\{0,1,.\,,+,-\}$.
A finite-state automaton $\Aut{A}$ over such \emph{$(m+n)$-track words} %works similar to one over plain, one-dimensional words,
%but 
can be seen as a (synchronised) transducer between input symbols $(a_1,\ldots,a_m)$ and 
output symbols $(b_1,\ldots,b_n)$. Synchronicity guarantees regularity of the auomata's languages \cite{Berstel79}.

We present a complete construction of a weak B\"uchi automaton of exponential size that recognises the input-output behaviour 
of a given DNN. Weak B\"uchi automata are known to allow for more efficient algorithms than general B\"uchi automata, as they can
also be seen as co-B\"uchi automata. In fact, we show that not even the full power of weak automata is needed but a special
subclass suffices. We then show how most relevant problems regarding the verification and interpretation of DNN can be addressed 
using this construction and automata-theoretic machinery. It turns out that the exponential blowup in the translation is unavoidable,
(unless P = NP) as it can be used to decide output reachability.

In Sect.~\ref{sec:prelim} we give preliminary definitions regarding DNN, encodings of reals and Büchi automata. 
The core contribution, the construction of a special kind of weak Büchi automaton capturing the behaviour of DNN, is done in 
Sect.~\ref{sec:translation}.
In Sect.~\ref{sec:verification}, we introduce common verification and interpretation
problems regarding DNN and show that they can be tackled using the translation from DNN to automata.
In Sect.~\ref{sec:outlook}, we summarise and discuss possible future work.
Proof details for the technical results are deferred to App.~\ref{sec:proofs}.

\paragraph*{Related work.}
The work presented here falls into the intersection of neural-network-based machine learning on the one hand, and automata
theory on the other. Extra focus is on the use of automata-theoretic tools for tackling challenges on the machine learning side.
Most ongoing research in this area is focused on the combination of automata and so-called recurrent neural networks (RNN),
a model for processing sequential data \cite{WeissGY18_rnnauto,KhmelnitskyNRXB21_rnnauto,AyacheEG18_rnnauto,MayrYV21_rnnauto}.
Additionally, there was extensive research on automata and RNN in earlier days of neural network analysis. A good 
overview of this is given in \cite{Jacobsson_rnnautosurvey}. 
The common underlying theme there is to obtain finite-state automata, 
often DFA, which capture the dynamic behaviour of RNN.
The goal of our approach here is similar, yet there are two fundamental differences: 
first, the techniques mentioned above only work for RNN, while our approach can be 
applied to more general neural network models, including linear layers with piece-wise linear activations. 
It is open how far the approach generalises.
Second, the automata derived from RNN work on sequences of data points, where each single data point is a symbol. 
Finite alphabets are obtained by finitely partitioning the real-valued input space of an RNN.
Our approach yields automata working on single, encoded data points. By using 
nondeterministic Büchi automata (NBA), we retain full precision regarding the input space.
Xu et al.\ \cite{XuWQH21__nntoauto} present an active-learning based algorithm for extracting DFA
from neural network classifiers. Similar to our approach, these DFA work on encoded inputs of the neural
network. Since they use abstraction techniques, the resulting on-tape automata only approximate the behaviour of the 
neural network.

Use cases of our translation from DNN to finite-state automata explored in this paper include verification 
and interpretation of DNN. A comprehensive survey on DNN verification is given
by Huang et al.\ \cite{HKRSSTWY20_survey}, one on the state-of-the-art regarding DNN interpretation
is given by Zhang et al.\ \cite{ZhangTLT21_surveyinterp}.
  
It is also not hard to see that the problems in DNN verification and interpretation considered here can be expressed in the 
(decidable) theory of the reals with addition and multiplication by rational constants. Interestingly, 
weak B\"uchi automata -- which avoid most intrinsically difficult constructions for general
B\"uchi automata -- can be used to decide this theory \cite{BoigelotRW98_realnba,BoigelotJW05}. We remark, though, that DNN do not need 
the full power of this logic but only the existential-positive fragment. It is therefore reasonable to construct 
weak B\"uchi automata for DNN directly instead of going through the more powerful general theory of the reals.

\section{Preliminaries}
\label{sec:prelim}
% !TEX root =  ../main.tex

For a $k$-dimensional vector $v \in A^k$ with $k \ge 1$ and some set $A$, we denote its components by $v_1,\ldots,v_k$ respectively.
Sometimes, we write vectors like $\bs{x},\bs{v},\ldots$ to stress their vector nature.
%We denote vectors using small bold letters like $\bs{r}, \bs{s}, \dotsc$\TODO{Machen wir nicht wirklich. Kommentar raus? - ML} and address the $i$th component of a vector $\bs{r}$ using $\bs{r}_i$. Let $f$ be some arbitrary function over the real numbers. 
%If not stated otherwise, we extend $f$ to $k$-dimensional vectors $(x_1, \dotsc, x_k)$ using an elementwise application, 
%i.e.\ $f(x_1, \dotsc, x_k) = (f(x_1), \dotsc, f(x_k))$. 

%\subsection{(Deep) Neural Networks}
%\label{sec:prelim;subsec:dnn}

\subsubsection*{(Deep) Neural Networks.} A \emph{(DNN-)node} is a function $v\colon\Reals^k\to\Reals$ with $v(\bs{x}) 
= \sigma(\sum_{i=1}^{k} c_i x_i + b)$, where $k$ is the \emph{input dimension}, the $c_i \in \Rats$ are called \emph{weights}, 
$b \in \Rats$ is the \emph{bias} and $\sigma: \Reals \to \Reals$ is the \emph{activation function} of $v$.\footnote{The
literature allows weights and biases from $\Reals$. Since we study effective translations, DNN need to be finitely 
representable so we require the values to be rational.} A common activation function is the piecewise 
linear \emph{ReLU} function (for \emph{Rectified Linear Unit}), defined as $\relu(x) = \max(0,x)$.
A \emph{(DNN-)layer} $l$ is a tuple of some $n$ nodes $(v_1, \dotsc, v_n)$ where each node has the same input dimension $m$ and the same activation function. It computes the function $l\colon\Reals^m\to\Reals^n$ via $l(\bs{x}) = (v_1(\bs{x}), \dotsc, v_n(\bs{x}))$. We call 
$m$ the \emph{input} and $n$ the \emph{output dimension} of $l$.
A \emph{Deep Neural Network (DNN)} $N$ consists of $k$ layers $l_1, \dotsc, l_k$, where $l_1$ has input dimension $m$, the output dimension of $l_i$ is equal to the input 
dimension of $l_{i+1}$ for $i < k$ and the output dimension of $l_k$ is $n$. The DNN $N$ computes a function from $\Reals^m$ to $\Reals^n$ by $N(\bs{x}) = l_k(l_{k-1}( \ldots l_1(\bs{x}) \ldots ))$.

In order to estimate the asymptotic complexity of the proposed translation from DNN to finite-state automata, we introduce the 
following (approximate) size measures. For $c \in \Rats$ let $\measure{c} := \log |n| + \log d$ where $d$ is the smallest positive 
natural number s.t.\ $\frac{n}{d} = c$ with $n \in \Ints$. Accordingly, we define the size of a DNN-node $v$ computing
$\sum_{i=1}^{k} c_i x_i +b$ as $\measure{v} = \sum_{i=1}^k \measure{c_i} + \measure{b}$ and the size of a DNN $N$ 
with a total of $k$ nodes $v_1, \dotsc, v_k$ as $\measure{N} = \sum_{i=1}^{k} \measure{v_i}$.

%A \emph{(finite) piecewise linear function (PWL)} is a function  $f\colon\Reals\to\Reals$ such that
%$\Reals$ can be partitioned into intervals $I_1,\dotsc,I_k$ and for each $1 \leq j \leq k$ there is a 
%linear function $g_j$ with $f(x) = g_j(x)$ for all $x \in I_j$.
%For the same reasons as stated above, we only consider PWL with parameters from $\Rats$. 
%We denote the set of such functions by $\pwl$.

%\subsection{Automata Recognising Relations over the Reals}
%\label{sec:prel:automata}

%\paragraph*{Weak B\"uchi Automata.}
\subsubsection*{Weak B\"uchi Automata.}
Let $\Sigma$ be an alphabet. As usual, let $\Sigma^*$ and $\Sigma^\omega$ denote the set of all finite, resp.\ infinite words over
$\Sigma$. 
A \emph{nondeterministic B\"uchi automaton} (NBA) is a tuple $\mathcal{A} = (Q,\Sigma,q_0,\delta,F)$ s.t.\ $Q$ is 
a finite set of states, $\Sigma$ is the underlying alphabet, $q_0 \in Q$ is a designated starting state, 
$\delta \subseteq Q \times \Sigma \times Q$ is the transition relation, and $F \subseteq Q$ is a designated set of accepting
states. The \emph{size} of $\Aut{A}$ is measured as $\size{\Aut{A}} := \size{Q}$. 
A run on an infinite word 
$w = a_0a_1\ldots$ is an infinite sequence $\rho = q_0,q_1,\ldots$ starting in the initial state and satisfying 
$(q_i,a_i,q_{i+1}) \in \delta$ for all $i \ge 0$. It is accepting if $q_i \in F$ for infinitely many $i$. The language of an 
NBA $\Aut{A}$ is $L(\Aut{A}) = \{ w \in \Sigma^\omega \mid \text{there is an accepting run of } \Aut{A} \textit{ on } w\}$. 

A \emph{weak (nondeterministic) B\"uchi automaton} (WNBA) is an NBA whose state set $Q$ can be partitioned into strongly connected
components (SCC) such that for each SCC $S \subseteq Q$ we have $S \subseteq F$ or $S \cap F = \emptyset$, i.e.\ each SCC
either consists of accepting states or non-accepting states only. It is known that WNBA are less expressive than NBA, for example 
there is no WNBA accepting $(a^*b)^\omega$. For the purposes developed here, namely the recognition of relations
of real numbers defined by arithmetical operations, weak NBA suffice, which has been observed before \cite{BoigelotJW05}.
The benefit of using WNBA comes from better algorithmic properties: whilst, for example, determinisation is notoriously
difficult for general NBA, it is much simpler for WNBA as they can also be seen as co-B\"uchi automata that are easier
to determinise \cite{Miyano:1984:AFA}. Likewise, minimisation is
quite important for practical applications, and just like determinisation, minimisation of general B\"uchi automata is more
difficult than it is for automata on finite words, while algorithms for those can typically be lifted to weak B\"uchi automata,
cf.\ \cite{Loding:2001:EMD}.

In fact, it turns out that we do not even need the full power of WNBA either. 
An \emph{eventually-always weak nondeterministic B\"uchi automaton} (\FGWNBA) is a WNBA such that every path through its 
state set contains at most one transition 
from a non-accepting to an accepting state and no transitions from accepting to non-accepting ones. In other words, every
accepting run is of the form $(Q \setminus F)^* F^\omega$. Furthermore, \FGWNBA are closed under unions and intersections, using
the usual product construction and appropriate sets of accepting states. 
Later we reduce decision problems on DNN to automata-theoretic ones. We therefore need to argue that the corresponding problems
on the automata side are (efficiently) decidable. For the DNN problems considered here, language emptiness suffices, and more
complex problems like inclusion are not needed. The following is well-known for (weak) B\"uchi automata.
%It is of course known that emptiness of B\"uchi automata can be decided in 
%nondeterministic logarithmic space and quadratic time. For weak automata this can be improved since it suffices to check whether
%some non-trivial accepting SCC is reachable from the initial state. 
 
\begin{proposition}
\label{prop:emptiness}
Emptiness of a \FGWNBA $\Aut{A}$ is decidable in time linear in $\size{\Aut{A}}$.
\end{proposition}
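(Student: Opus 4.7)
The plan is to reduce emptiness of a \FGWNBA to two standard graph-theoretic computations on the underlying transition graph of $\Aut{A}$: strongly-connected-component decomposition and reachability, both of which run in linear time.

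First I would establish the following characterisation: $L(\Aut{A})\neq\emptyset$ iff there exists an accepting state $q\in F$ that is reachable from $q_0$ and that lies on a directed cycle contained in $F$. The ``if'' direction is immediate: prepend a path from $q_0$ to $q$ to an infinite unrolling of the cycle through $q$; the resulting run is accepting because it visits $F$ infinitely often. For the ``only if'' direction, pick any accepting run $\rho = q_0 q_1 \ldots$ of $\Aut{A}$; by the defining property of \FGWNBA, $\rho$ eventually enters $F$ and stays there, so from some index on, $\rho$ moves only within the (finite) set $F$ and therefore some accepting state $q$ must repeat, exhibiting the required cycle.

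Next, I would argue that condition can be checked in time linear in $\size{\Aut{A}}$. Compute the strongly connected components of the underlying directed graph $(Q, \{(p,p') \mid \exists a.\,(p,a,p')\in\delta\})$ using Tarjan's or Kosaraju's algorithm; this takes time $\BigO{\size{Q}+\size{\delta}}$, which is linear in $\size{\Aut{A}}$ under the stated size measure. By the weak-automaton property, every SCC $S$ satisfies either $S\subseteq F$ or $S\cap F=\emptyset$. An SCC $S\subseteq F$ supports an infinite run inside $F$ iff it is \emph{non-trivial}, meaning $|S|\geq 2$ or $S=\{q\}$ with a self-loop $(q,a,q)\in\delta$; this is readable directly from the SCC computation. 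Finally, perform a single depth-first search from $q_0$ to determine the reachable states, and report non-emptiness iff some non-trivial accepting SCC contains a reachable state.

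The correctness of the algorithm follows from the characterisation above: a non-trivial accepting SCC that contains a reachable state provides a reachable accepting state lying on a cycle in $F$, and conversely any such cycle must lie in a single (necessarily accepting and non-trivial) SCC. All three steps (SCC decomposition, identification of non-trivial accepting SCCs, DFS reachability) run in linear time and compose to a linear-time decision procedure. There is no real obstacle here; the only minor subtlety is the careful treatment of singleton SCCs without self-loop, which the \FGWNBA structure forces us not to confuse with genuine accepting cycles.
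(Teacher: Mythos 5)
Your argument is correct and is exactly the standard one; the paper itself gives no proof of Prop.~\ref{prop:emptiness}, merely noting that the statement is well-known for (weak) B\"uchi automata, so there is nothing to diverge from. Your characterisation (non-emptiness iff a reachable accepting state lies on a cycle inside $F$, which for a weak automaton reduces to reachability of a non-trivial accepting SCC) and the linear-time SCC-plus-DFS implementation are the intended content. One small caveat: your claim that $\BigO{\size{Q}+\size{\delta}}$ is ``linear in $\size{\Aut{A}}$ under the stated size measure'' is slightly off, since the paper defines $\size{\Aut{A}}$ as $\size{Q}$ alone while $\size{\delta}$ can be quadratic in $\size{Q}$ (and grows with the exponentially large multi-track alphabet); the paper acknowledges this tension in the remark following Lemma~\ref{lem:composition}, so it is an imprecision you share with the authors rather than a gap in your reasoning.
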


\subsubsection*{Encodings of reals.}
%\paragraph*{Encodings of Reals.}
In the following, let $\Sigma = \{+,-,.\,,0,1\}$ unless stated explicitly otherwise. 
A word $w = s a_{n-1} \ldots a_0 . b_0 b_1 \ldots$ with $n \ge 0$, $s \in \{+,-\}$, $a_i,b_i \in \{0,1\}$ uniquely encodes a real value
$\dec(w) := (-1)^{\mathrm{sign}(s)} \cdot (\sum_{i=0}^{n-1} a_i \cdot 2^i + \sum_{i=0}^\infty b_i \cdot 2^{-(i+1)})$
where $\mathrm{sign}(s) = 0$ if $s = +$ and $\mathrm{sign}(s) = 1$ otherwise, and $\sum_{i=0}^{-1} \varphi_i = 0$. Note that 
the infinite sum on the right is always converging. 
Moreover, while the decoding $\dec(w)$ of a word $w$ is unique, the encoding $\enc(r)$ of any $r \in \Reals$ as such a word 
in binary representation is not necessarily unique, for three reasons: leading zeros change the word representation but not the underlying value,
both $+0.0^\omega$ and $-0.0^\omega$ represent the same value, namely $0$, and any number whose representation has a suffix of the form 
$10^\omega$ (possibly including a dot) also can be written with the suffix $01^\omega$ instead. For instance, the number $12$ has representations 
$+1100.0^\omega$ and $+1011.1^\omega$. 

\subsubsection*{\FGWNBA for relations of reals.}
Let $k \geq 1$. We denote with $\Sigma^k$ the alphabet consisting of all $k$-vectors of letters from $\Sigma$, using both 
vertical (as below) and horizontal vector notation (like $[s_1,\ldots,s_k]$) for convenience. A word over
$\Sigma^k$ is \emph{well-formed} if it is of the form $\overline{s} \, \overline{a}_n \dotsb \overline{a}_0 \, \overline{d} \, \overline{b}_0 \, \overline{b}_1 \dotsb$
%\begin{displaymath}
%\threes{s_1}{\vdots}{s_k}
%\threes{a_{1,n}}{\vdots}{a_{k,n}}
%\cdots
%\threes{a_{1,0}}{\vdots}{a_{k,0}}
%\threes{.}{\vdots}{.}
%\threes{b_{1,0}}{\vdots}{b_{k,0}}
%\threes{b_{1,1}}{\vdots}{b_{k,1}}
%\cdots
%\end{displaymath}
with $s_i \in \{+,-\}$, $a_{i,j}, b_{i,j} \in \{0,1\}$ and $\overline{d}$ being the vector of $k$ dot-symbols. I.e.\ it 
starts with signs on all tracks, and each track contains exactly one dot, and these are all aligned. 
Such a word induces a $k$-tuple $(w_1,\ldots,w_k)$ of words over $\Sigma$ in the straightforward way: $w_i$ is represented by
the $\Sigma$-word $s_i a_{i,n-1} \ldots a_{i,0} . b_{i,0} b_{i,1} \ldots$ as above.
For example, let $k = 2$ and
\begin{displaymath}
	w = \twos{-}{+}\twos{0}{1}\twos{1}{0}\twos{1}{1}\twos{0}{0}\twos{1}{0}\twos{.}{.}\twos{0}{1}\Big(\twos{1}{0}\twos{1}{1}\Big)^\omega\ .
\end{displaymath}
It induces words $w_1, w_2$ that represent the numbers $\dec(w_1) = -13.5$ as well as $\dec(w_2) = 20 \frac{2}{3}$.
In the following, we will restrict our attention to well-formed words and write $\WF{\Sigma}{k}$ for the set of all such well-formed
$k$-track words. 
%Note that $\WF{\Sigma}{1}$ then denotes the set of all representations of real numbers, and well-formedness in this
%case boils down to starting with a sign symbol and having exactly one occurrence of a dot symbol. 
%$\WF{\Sigma}{k}$ 
It is definable by a
\FGWNBA for any $k \ge 1$, namely the following one.
\begin{center}
\begin{tikzpicture}[semithick,initial text={\normalsize$\Aut{A}_{\mathsf{wf}}^k$}, every state/.style={inner sep=2pt, minimum size=4mm},font=\small,
                    node distance=18mm]
  \node[state,initial,initial where=left] (s0)               {};
  \node[state]                            (s2) [right of=s0] {};
  \node[state,accepting]                  (s3) [right of=s2] {};
  
  \path[->] (s0) edge              node [above] {$\{+,-\}^k$} (s2)
            (s2) edge [loop above] node [above] {$\{0,1\}^k$} ()
                 edge              node [above] {$\{.\}^k$}   (s3)
            (s3) edge [loop above] node [above] {$\{0,1\}^k$} ();
\end{tikzpicture}
\end{center}
Using that \FGWNBA are closed under intersection, w.l.o.g.\ all words are well-formed.
By the correspondence of a (well-formed) word from $(\Sigma^k)^\omega$ to $k$ words from $\Sigma^\omega$, we can view
the language of a \FGWNBA over the alphabet $\Sigma^k$ as a $k$-ary \emph{relation} of words $(w_1,\ldots,w_k)$ and, 
by the use of the decoding function $\dec$, as a $k$-ary relation of real numbers $(\dec(w_1),\ldots,\dec(w_k))$. We will 
therefore write $R(\Aut{A})$ instead of $L(\Aut{A})$ to denote the \emph{relation} of the automaton $\Aut{A}$ which,
technically, is just its language of the multi-track alphabet. 

We will need closure of the class of \FGWNBA-definable languages 
under several (arithmetical) operations which can be derived from two further basic ones: projections and products. 
%\TODO{Vielleicht sollte man vorher was zu $k$-Spur-Sprachen sagen?}
Given a $k$-ary relation $R$ and a tuple $\pi = (i_1,\ldots,i_n)$ with $i_j \in \{1,\ldots,k\}$ for all $j$, the $\pi$-\emph{projection} 
of $R$ is the $n$-ary relation $\proj{(R)}{\pi} := \{ (w_{i_1},\ldots,w_{i_n}) \mid (w_1,\ldots,w_k) \in R \}$.
\begin{lemma}
\label{lem:projection}
	Let $\Aut{A}$ be a \FGWNBA s.t.\ $R(\Aut{A})$ is $k$-ary for some $k \geq 1$. Let $\pi \in \{1,\ldots,k\}^n$. There is a \FGWNBA 
	$\proj{(\Aut{A})}{\pi}$ of size $\mathcal{O}(|\Aut{A}|)$ s.t.\ $R(\proj{(\Aut{A})}{\pi}) = \proj{(R(\Aut{A}))}{\pi}$.
\end{lemma}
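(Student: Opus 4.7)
The construction is straightforward: $\proj{(\Aut{A})}{\pi}$ will have the same state space, initial state, and accepting set as $\Aut{A}$, and its transition relation is obtained by relabelling each transition of $\Aut{A}$ via the same projection $\pi$ applied letter-wise. Formally, if $\Aut{A} = (Q, \Sigma^k, q_0, \delta, F)$, then set $\proj{(\Aut{A})}{\pi} := (Q, \Sigma^n, q_0, \delta', F)$ with
\[
  \delta' \,:=\, \{(q,\,[a_{i_1},\ldots,a_{i_n}],\,q') \,\mid\, (q,\,[a_1,\ldots,a_k],\,q') \in \delta\}.
\]
The size bound $|\proj{(\Aut{A})}{\pi}| = |\Aut{A}|$ is immediate since the state set is unchanged.

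Next I would verify language equivalence. Any run $\rho = q_0, q_1, \ldots$ of $\Aut{A}$ on a well-formed word $w = \bar{a}_0 \bar{a}_1 \cdots$ is, by construction of $\delta'$, also a run of $\proj{(\Aut{A})}{\pi}$ on the letter-wise projected word $w_\pi := (\bar{a}_0)_\pi (\bar{a}_1)_\pi \cdots$, and conversely every run of $\proj{(\Aut{A})}{\pi}$ on some $u$ lifts (nondeterministically, by choosing a preimage letter for each transition used) to a run of $\Aut{A}$ on some $w$ with $w_\pi = u$. Since the state sequence is literally the same on both sides, acceptance is preserved, giving $R(\proj{(\Aut{A})}{\pi}) = \proj{(R(\Aut{A}))}{\pi}$. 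Well-formedness of the projected words is preserved because well-formedness is a track-local condition (sign, then digits, then dot, then digits, with dots aligned across tracks), and selecting a subset or reordering of tracks keeps the dot positions aligned.

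Finally, the \FGWNBA-shape has to be checked. The partition of $Q$ into SCCs depends only on the directed graph $(Q, \{(q,q') \mid \exists a.\, (q,a,q') \in \delta\})$, and this graph is unchanged by the projection since we only relabel transitions. Hence the SCC decomposition and the assignment of each SCC to $F$ or $Q \setminus F$ are identical, and likewise the global condition that every path contains at most one $F$-entering edge and no $F$-leaving edge carries over verbatim.

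I do not expect a genuine obstacle here; the whole argument is a routine relabelling. The only point that needs a moment of care is the track-local nature of well-formedness, which ensures that no separate post-intersection with $\Aut{A}_{\mathsf{wf}}^n$ is required. If one wanted to be extra safe, intersecting with $\Aut{A}_{\mathsf{wf}}^n$ is harmless and also keeps the size linear.
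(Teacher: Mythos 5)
Your construction is exactly the paper's: relabel each transition $(q,[a_1,\ldots,a_k],p)$ as $(q,[a_{i_1},\ldots,a_{i_n}],p)$ while keeping states, initial state and accepting set, so that runs and acceptance carry over verbatim in both directions. The extra observations about preservation of well-formedness and of the \FGWNBA{} shape are correct and only elaborate on what the paper leaves implicit.
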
 

%\begin{proof}
%This is a standard construction that applies the projection pointwise to the tuples in each transition: if $\pi = (i_1,\ldots,i_n)$
%then $\proj{(\Aut{A})}{\pi}$ is obtained from $\Aut{A}$ by replacing each transition $(q,[a_1,\ldots,a_k],p)$ by
%$(q,[a_{i_1},\ldots,a_{i_n}],p)$. As a consequence, an accepting run of $\Aut{A}$ on $(w_1,\ldots,w_k) \in (\Sigma^k)^\omega$ 
%is an accepting run of $\proj{(\Aut{A})}{\pi}$ on $(w_{i_1},\ldots,w_{i_n}) \in (\Sigma^n)^\omega$ and vice-versa. \qed
%\end{proof}

Whilst, technically, the projection operation can be used to duplicate and re-arrange tracks in a multi-tracked word, we 
mostly use it to delete tracks. For example, if $R$ is a $3$-ary relation, then $\proj{R}{(1,3)}$ results from the collection 
of all tuples that are obtained by deleting the second component in a triple from $R$. 
%Consequently, we introduce the notation
%$\proj{R}{-(i_1,\ldots,i_k)}$ which denotes the projection onto all the tracks (in their natural order) apart from $i_1,\ldots,i_k$.

Next, let $R_1$ be a $k_1$-ary and $R_2$ be a $k_2$-ary relation. The \emph{Cartesian product} is, as usual, the 
$(k_1+k_2)$-ary relation 
$R_1 \times R_2 := \{ (w_1,\ldots,w_{k_1},v_1,\ldots,v_{k_2}) \mid (w_1,\ldots,w_{k_1}) \in R_1, (v_1,\ldots,v_{k_2}) \in R_2 \}$.
%Together with projections, joins can be used to form the composition of relations. For example, for two binary
%relations $R_1$ and $R_2$, we have that $\proj{(R_1 \bowtie_{2=1} R_2)}{-(2)}$ is the usual composition of binary relations 
%$R_1 \circ R_2 := \{ (u,w) \mid \exists v. (u,v) \in R_1$ and $(v,w) \in R_2 \}$.

\begin{lemma}
\label{lem:join}
\label{lem:product}
	Let $\Aut{A}_1,\Aut{A}_2$ be two \FGWNBA recognising a $k_1$-, resp.\ $k_2$-ary relation. There is a \FGWNBA 
	$\Aut{A}_1 \times \Aut{A}_2$ of size $\mathcal{O}(|\Aut{A}_1|\cdot |\Aut{A}_2|)$ s.t.\ 
	$R(\Aut{A}_1 \times \Aut{A}_2) = R(\Aut{A}_1) \times R(\Aut{A}_2)$.
\end{lemma}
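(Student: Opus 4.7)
The plan is the classical synchronised product construction. Writing $\Aut{A}_i = (Q_i, \Sigma^{k_i}, q_{0,i}, \delta_i, F_i)$ for $i \in \{1,2\}$, I would set $\Aut{A}_1 \times \Aut{A}_2 := (Q_1 \times Q_2,\, \Sigma^{k_1+k_2},\, (q_{0,1}, q_{0,2}),\, \delta,\, F_1 \times F_2)$, identifying a letter in $\Sigma^{k_1+k_2}$ with a pair $[a,b]$ where $a \in \Sigma^{k_1}$, $b \in \Sigma^{k_2}$, and setting $((q_1,q_2), [a,b], (q_1',q_2')) \in \delta$ exactly when $(q_1,a,q_1') \in \delta_1$ and $(q_2,b,q_2') \in \delta_2$. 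The state space has $|Q_1|\cdot|Q_2|$ elements, so the size bound is immediate.

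For correctness, runs of the product on a $(k_1+k_2)$-track word $w$ correspond one-to-one to pairs consisting of a run of $\Aut{A}_1$ on $\proj{w}{(1,\ldots,k_1)}$ and a run of $\Aut{A}_2$ on $\proj{w}{(k_1{+}1,\ldots,k_1{+}k_2)}$. A product run visits $F_1 \times F_2$ infinitely often iff each component run visits its own accepting set infinitely often, which directly yields $R(\Aut{A}_1 \times \Aut{A}_2) = R(\Aut{A}_1) \times R(\Aut{A}_2)$. Well-formedness of accepted words is inherited automatically, because both components enforce their dot letters at the same position.

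The main (although routine) technical point is to check that the \FGWNBA property is preserved. Any product transition originating in a state $(q_1, q_2) \in F_1 \times F_2$ decomposes into component transitions starting in accepting states of the $\Aut{A}_i$, and the \FGWNBA property of each $\Aut{A}_i$ forces both target states to remain in their respective $F_i$; hence the product target lies in $F_1 \times F_2$, ruling out transitions from accepting to non-accepting states. This observation also shows that $F_1 \times F_2$ is a trap in the product, so along any path at most one transition can enter it, which is precisely the second \FGWNBA condition. Consequently every accepting run of $\Aut{A}_1 \times \Aut{A}_2$ has the required shape $(Q_1\times Q_2 \setminus F_1\times F_2)^*\,(F_1\times F_2)^\omega$, and the construction is a \FGWNBA as claimed.
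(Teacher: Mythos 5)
Your construction is exactly the paper's: the synchronised product with state set $Q_1 \times Q_2$, componentwise transitions on split letters, and accepting set $F_1 \times F_2$, with correctness via the one-to-one correspondence between product runs and pairs of component runs. You additionally verify explicitly that the \FGWNBA shape is preserved (the paper only asserts this closure in the main text), and your reliance on the eventually-always structure to justify taking $F_1\times F_2$ as the accepting set -- rather than the flag construction needed for general NBA products -- is exactly the right observation.
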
 

%\begin{proof}
%We use a product construction again. Let $\Aut{A}_i = (Q_i,\Sigma^{k_i},$ $q_0^i,\delta_i,F_i)$
%for $i \in \{1,2\}$. Define $\Aut{A}_1 \times \Aut{A}_2$ as 
%$(Q_1 \times Q_2, \Sigma^{k_1+k_2}, (q_0^1,q_0^2), \delta, F_1 \times F_2)$ with
%\begin{align*}
%&\big((q_1,q_2),[a_1,\ldots,a_{k_1},b_1,\ldots,b_{k_2}],(p_1,p_2)\big) \in \delta \quad \text{iff} \\
%&\quad (q_1,[a_1,\ldots,a_{k_1}],p_1) \in \delta_1 \enspace \text{and} \enspace (q_1,[b_1,\ldots,b_{k_2}],p_2) \in \delta_2\ .
%\end{align*}
%Consequently, when $q^1_0,q^1_1,\ldots$ and $q^2_0,q^2_1,\ldots$ are accepting runs of $\Aut{A}_1$ on $(w_1,\ldots,$ $w_{k_1})$,
%and of $\Aut{A}_2$ on $(v_1,\ldots,v_{k_2})$ respectively, then $(q^1_0,q^2_0),(q^1_1,q^2_1),\ldots$ 
%is an accepting run of $\Aut{A}_1 \times \Aut{A}_2$ on $(w_1,\ldots,w_{k_1},v_1,\ldots,v_{k_2})$ and vice-versa. \qed
%\end{proof}

Let $1 \le i,j \le k$. It is easy to construct an automaton which accepts some $w \in \WF{\Sigma}{k}$ iff $w_i = w_j$, i.e.\ 
that checks for equality in the word representation of two numbers in a tuple. We need a more relaxed operation, namely an
automaton that accepts such a $k$-track word iff the $i$-th and $j$-th track represent the same number, possibly using 
different representations of it. Note for example that $+0.1^\omega$ and $+1.0^\omega$, or $+.0^\omega$ and $-.0^\omega$ 
represent the same number in each case. Luckily, these two examples already show all the possibilities to create different
representations of the same number in well-formed multi-track words, and these situations can be recognised by a \FGWNBA. 

\begin{lemma}
\label{lem:equality}
Let $k \ge 2$, $1 \le i < j \le k$. There is a \FGWNBA $\Aut{A}^k_{i=j}$ of size $\BigO{1}$ such that 
$R(\Aut{A}^k_{i=j}) = \{ w \in \WF{\Sigma}{k} \mid \dec(w_i) = \dec(w_j) \}$.
\end{lemma}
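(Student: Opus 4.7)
My plan is first to nail down the combinatorial characterisation of when two well-formed tracks encode the same real, and then to read off a small automaton directly from this characterisation. Because dots are aligned in any word of $\WF{\Sigma}{k}$, the integer parts of tracks $i$ and $j$ always have the same length, so discrepancies through leading zeros alone cannot occur. The only remaining sources of non-uniqueness are (i) the two representations $+0\ldots0.0^\omega$ and $-0\ldots0.0^\omega$ of $0$, and (ii) the dyadic-rational phenomenon: if the bit sequences on tracks $i,j$ first differ at some position $p$ with, say, $w_i$ having $1$ and $w_j$ having $0$ at $p$, then a short calculation using $\sum_{k\ge 0} 2^{-k} = 2$ shows that $\dec(w_i)=\dec(w_j)$ forces $w_i$ to continue with $0$'s at every position strictly below $p$ (into the fractional part) and $w_j$ to continue with $1$'s. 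Hence $\dec(w_i)=\dec(w_j)$ holds iff either both tracks decode to $0$ (arbitrary signs, all bits zero), or the signs agree and the bit sequences are either bit-identical or differ by exactly one such carry pattern.

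Second, I would construct a FGWNBA with constantly many states organised in two phases. An initial state reads the aligned signs on all tracks and branches on the signs of tracks $i,j$: equal signs move to a pre-dot \emph{matching} state, unequal signs to a pre-dot \emph{both-zero} state. In the matching state, aligned pairs $\twos{0}{0}$ and $\twos{1}{1}$ on tracks $i,j$ self-loop, while a diverging pair $\twos{1}{0}$ or $\twos{0}{1}$ moves to one of two pre-dot \emph{carry} states whose self-loops accept only the forced complementary pair. The both-zero state self-loops only on $\twos{0}{0}$. On the aligned dot, each pre-dot state transitions to its post-dot counterpart, which behaves identically except that the post-dot matching state may still branch into a post-dot carry state if the first divergence occurs only after the dot. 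On the other $k-2$ tracks, transitions allow any letter from $\Sigma$ consistent with the current phase, and full well-formedness is enforced at the end by intersecting with the constant-size $\Aut{A}_{\mathsf{wf}}^k$ using Lemma~\ref{lem:product}. The state count is fixed and independent of $k$, giving size $\BigO{1}$.

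Third, I would verify the FGWNBA property by exhibiting the SCC decomposition: every pre-dot state (initial, matching, both-zero, and the two carry states) is a singleton non-accepting SCC, every post-dot state is a singleton accepting SCC, and although there are transitions between distinct accepting SCCs (the post-dot matching state to a post-dot carry state), there is no transition from an accepting state back to a non-accepting one, so every accepting run has the required $(Q\setminus F)^*F^\omega$ shape. The step I expect to be the actual obstacle is the characterisation in the first paragraph: I need to argue carefully that no equality between two well-formed representations can arise via anything other than one divergence followed by the forced complementary tails, ruling in particular out second divergences or direction switches once a carry phase has begun. Once this uniqueness-of-divergence fact is in hand, correctness and size of the automaton follow by direct inspection.
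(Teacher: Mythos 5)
Your construction is essentially the automaton the paper exhibits (Fig.~\ref{fig:equality}): an initial state branching on the sign pair, a matching component whose divergence transitions lead into carry states enforcing the $10^\omega$ versus $01^\omega$ tails, a separate all-zeros component for opposite signs, and a duplication of this structure across the dot so that exactly the post-dot states are accepting. Your explicit justification that a first divergence forces the complementary tails (via the geometric-series bound) is correct and merely spells out a detail the paper leaves implicit, so the proposal matches the paper's proof in both approach and conclusion.
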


%\begin{proof}
%$\Aut{A}^2_{1=2}$ is shown in Fig.~\ref{fig:equality}. Note that any $w \in \WF{\Sigma}{2}$ with $w_1 = w_2$ is accepted
%via the run that simply moves horizontally right from the initial state. The transitions leading up or down are used to capture
%encodings of the same number ending in $10^\omega$, resp.\ $01^\omega$ .The runs leading to the left from the initial state cover the special case of the positive and
%negative representation of $0$. To obtain $\Aut{A}^k_{i=j}$ for arbitrary $k,i,j$, it suffices to extend the transition labels
%in the automaton of Fig.~\ref{fig:equality} to contain arbitrary bits in positions other than $i$ and $j$. \qed
%\end{proof}

The automata $\Aut{A}^k_{i=j}$ are only used as auxiliary devices to form the closure of certain operations under different
number representation. As such, they are distinguished from other automata that we construct in the sense that most of them
operate on words of $k$ tracks which can be divided into $m$ \emph{input} tracks and $n$ \emph{output} tracks, s.t.\ $k= m+n$.
%Such automata are called \emph{transducers}. We only use synchronised transducers in this respect here, i.e.\ 
%there is only a single read-head and, hence, the languages resp.\ relations of such automata are all regular \cite{Berstel79}.
There is no technical difference between input and output tracks, though; the distinction is just useful in the specification
of certain operations. 

In such a setup it is natural to 
generalise the composition of two binary relations to ones of arbitrary, but matching arities. Suppose $R_1$ and $R_2$ are 
relations of arities $k_1$, resp.\ $k_2$, and $k \le \min \{k_1,k_2\}$. We regard $R_1$'s last $k$ tracks as its output and
$R_2$'s $k$ first tracks as its input. Then $R_1 \circ_k R_2 :=
\{ (u_1,\ldots,u_{k_1-k},w_{k_2-k+1},\ldots,w_{k_2}) \mid \exists v_1,\ldots,v_k \text{ s.t. } (u_1,\ldots,u_{k_1-k},v_1,\ldots,v_k) 
\in R_1, (v_1,\ldots,v_k,w_{k+1},\ldots,w_{k_2}) \in R_2 \}$.
We observe, for later constructions, that the class of \FGWNBA-definable languages is closed under such generalised
compositions.

\begin{lemma}
\label{lem:composition}
For $i \in \{1,2\}$ let $\Aut{A}_i$ be a \FGWNBA recognising a $k_i$-ary relation, and let $k \le \min \{k_1,k_2\}$. 
There is a \FGWNBA $\Aut{A}_1 \circ_k \Aut{A}_2$ of size $\mathcal{O}(|\Aut{A}_1|\cdot |\Aut{A}_2|)$ s.t.\ 
$R(\Aut{A}_1 \circ_k \Aut{A}_2) = R(\Aut{A}_1) \circ_k R(\Aut{A}_2)$.
\end{lemma}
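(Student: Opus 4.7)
The plan is to express the generalised composition $\Aut{A}_1 \circ_k \Aut{A}_2$ as a sequence of operations for which closure of the \FGWNBA class has already been established in the preceding lemmas: Cartesian product (Lemma~\ref{lem:product}), intersection with equality-of-represented-number automata (Lemma~\ref{lem:equality}), and projection (Lemma~\ref{lem:projection}). All three preserve the \FGWNBA shape and contribute only a polynomial blow-up, so combining them in the right order yields the claimed bound.

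First I would form $\Aut{B}_0 := \Aut{A}_1 \times \Aut{A}_2$ via Lemma~\ref{lem:product}, obtaining a \FGWNBA on $k_1+k_2$ tracks of size $\mathcal{O}(|\Aut{A}_1|\cdot|\Aut{A}_2|)$. Its relation consists of all words whose first $k_1$ tracks lie in $R(\Aut{A}_1)$ and whose last $k_2$ tracks lie in $R(\Aut{A}_2)$, but with no coupling yet between the last $k$ tracks of the $\Aut{A}_1$-part (positions $k_1-k+1,\ldots,k_1$) and the first $k$ tracks of the $\Aut{A}_2$-part (positions $k_1+1,\ldots,k_1+k$).

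Next I would enforce that coupling by intersecting $\Aut{B}_0$ with the $k$ equality automata from Lemma~\ref{lem:equality}, one per pair of tracks $(k_1-k+i,\, k_1+i)$ for $i=1,\ldots,k$, each lifted to operate on $k_1+k_2$ tracks (ignoring the other tracks by allowing any well-formed letter on them). Since \FGWNBA are closed under intersection via the product construction with an appropriate choice of accepting sets, and each equality automaton is of constant size, the resulting \FGWNBA $\Aut{B}_1$ stays within $\mathcal{O}(|\Aut{A}_1|\cdot|\Aut{A}_2|)$, with $k$ absorbed as part of the input parameters. Finally I would apply Lemma~\ref{lem:projection} to $\Aut{B}_1$ with the index tuple $\pi$ that retains exactly the $k_1-k$ free tracks of $\Aut{A}_1$ and the $k_2-k$ free tracks of $\Aut{A}_2$, dropping the $2k$ coupled middle tracks. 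By the construction, a tuple is in the relation of the resulting automaton iff it arises from witnesses $(v_1,\ldots,v_k)$ that satisfy both $R(\Aut{A}_1)$ and $R(\Aut{A}_2)$ on the coupled positions, i.e.\ precisely $R(\Aut{A}_1)\circ_k R(\Aut{A}_2)$.

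The only place where real content is needed is the coupling step: because a single real number admits several well-formed binary encodings (differing leading signs on zero, and the $10^\omega / 01^\omega$ trailing alternative), syntactic track equality would be too strong. The main obstacle is therefore already handled by Lemma~\ref{lem:equality}, which gives constant-size \FGWNBA precisely for equality up to these encoding ambiguities; once this is available, the rest of the argument is routine bookkeeping about track indices.
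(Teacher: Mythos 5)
Your proposal is correct and follows essentially the same route as the paper: form the product on $k_1+k_2$ tracks, intersect with the $k$ constant-size equality automata of Lemma~\ref{lem:equality} on the coupled track pairs $(k_1-k+i,\,k_1+i)$, and project away the $2k$ middle tracks. The paper writes the product step as $(\Aut{A}_1 \times \Aut{A}_{\mathsf{wf}}^{k_2}) \cap (\Aut{A}_{\mathsf{wf}}^{k_1} \times \Aut{A}_2)$ rather than $\Aut{A}_1 \times \Aut{A}_2$, but this is the same construction, and you correctly identify the one genuine subtlety (semantic rather than syntactic equality of the coupled tracks).
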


%\begin{proof}
%$\Aut{A}_1 \circ_k \Aut{A}_2$ can be obtained as
%\begin{displaymath}
%\proj{\big( (\Aut{A}_1 \times \Aut{A}_{\mathsf{wf}}^{k_2}) \cap (\Aut{A}_{\mathsf{wf}}^{k_1} \times \Aut{A}_2) \cap 
%\bigcap\limits_{i=1}^{k} \Aut{A}^{k_1+k_2}_{k_1-k+i=k_1+i}\big)}{1,\ldots,k_1-k,k_1+k+1,\ldots,k_1+k_2} \ .
%\end{displaymath}
%The inner part checks simultaneously whether a given $(k_1+k_2)$-track word is s.t.\ the first $k_1$ tracks are recognised
%by $\Aut{A}_1$, the last $k_2$ tracks are recognised by $\Aut{A}_2$, and that the output tracks of the former represent the same
%numbers as the input tracks of the latter. The projection is then used to delete all tracks apart from the inputs of the former
%and the outputs of the latter. \qed
%\end{proof}

We remark that Prop.~\ref{prop:emptiness} -- emptiness checks in time linear in the number of states -- is of course true for 
multi-track \FGWNBA as well. However, their alphabet $\Sigma^k$ is of size exponential in $k$, and this can lead to a number
of transitions that is exponential in $k$. However, on $n$ states there can be at most $n^2$ many different transitions
which calls for symbolic representations of $\Sigma^k$ in actual implementations. Also, the automata derived from DNN will be 
of exponential size in which case the possibly exponential size of the alphabet does not affect the statements made in the
following on asymptotic complexity. For the sake of simplicity, these are made with regards to the number of states of an automaton.

\section{Translating DNN into \FGWNBA}
\label{sec:translation}
% !TEX root =  ../main.tex

The overall goal of this work is to develop the machinery that allows the input-output behaviour of a DNN to be captured 
by finite-state automata, here using \FGWNBA.
The definition of DNN given in Sec.~\ref{sec:prelim} implies an
inductive view on DNN: each DNN node itself is a DNN with one layer
consisting of one node, each DNN layer itself is a DNN with one layer and each
subset of consecutive layers is a DNN with several layers. 
We use this inductive view to first argue that there are \FGWNBA which capture the
computation of each node and then that there are \FGWNBA capturing whole layers
and complete DNN.

Let $v$ be a node computing $\relu(\sum_{i=1}^{k} w_i x_i + b)$. From its functional
form we can infer that the computation of $v$ is built from multiple instances 
of three fundamental operations: 1.\ multiplication of some arbitrary value with a fixed constant,
2.\ summation of arbitrary values and 3.\ the application of $\relu$ to some arbitrary value.
For each operation, we define a corresponding \FGWNBA and then combine these using the operations specified in 
Lemmas~\ref{lem:projection}, \ref{lem:join}, \ref{lem:composition} and the closure under $\cap$ and $\cup$.

\begin{lemma}
	\label{lem:help}
	Let $k \geq 2$,  $1 \leq i,j \leq k$ and $1 \leq i_1, \dotsc, i_n \leq k$ where $i \neq i_h$ and $i_h \neq i_l$ for $h,l \in \{1, \dotsc, n\}$. 
	There is a \FGWNBA 
	\begin{enumerate}
		\item $\Aut{A}^{k+1}_{i=\mathsf{add}(i_1,\ldots,i_n)}$ of size $2^{\BigO{k}}$such that 
		$R(\Aut{A}^{k+1}_{i=\mathsf{add}(i_1,\ldots,i_n)}) = \{ w \in \WF{\Sigma}{k+1} \mid \dec(w_{i}) = \sum_{h=1}^n \dec(w_{i_h}) \}$,
		\label{lem:help;add}
		\item $\Aut{A}^k_{j=\mathsf{relu}(i)}$ of size $\BigO{1}$ such that
		$R(\Aut{A}^k_{j=\mathsf{relu}(i)}) = \{ w \in \WF{\Sigma}{k} \mid \dec(w_j) = \relu(\dec(w_i))\}$,
		\label{lem:help;relu}
		\item $\Aut{A}^k_{j=\mathsf{mult}(c,i)}$ of size $2^{\BigO{\measure{c}}}$ such that
		$R(\Aut{A}^k_{j=\mathsf{mult}(c,i)}) = \{ w \in \WF{\Sigma}{k} \mid \dec(w_j) = c \cdot \dec(w_i) \}$ for every $c \in \Rats$ and 
		\label{lem:help;const_mult}
		\item $\Aut{A}^{k-1}_{i=\mathsf{const}(c)}$ of size $\BigO{2^{\measure{c}}}$ such that
		$R(\Aut{A}^k_{i=\mathsf{const}(c)}) = \{ w \in \WF{\Sigma}{k} \mid \dec(w_i) = c \}$.
		\label{lem:help;const}
	\end{enumerate}
\end{lemma}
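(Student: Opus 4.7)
The overall strategy is to build each automaton directly as a carry/sign tracking device that reads the well-formed $k$- (or $(k+1)$-) track word left to right, using its state to remember a bounded amount of arithmetic information and verifying the declared relation locally at every position. In each case I will exploit the pre-established closure properties (intersection with $\Aut{A}_{\mathsf{wf}}^k$, product, projection, composition) to enlarge the alphabet from the "active" tracks to all $k$ tracks, so it is enough to build a small automaton whose non-mentioned tracks are unconstrained. This also makes the \FGWNBA structure evident: every automaton I build has a short prefix phase (reading signs and, for the fractional part, stabilising a carry) and then a single accepting SCC consisting of self-loops.

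The two simplest cases are done first. For \textbf{(4)}, the expansion of $c = p/q$ in binary is eventually periodic with pre-period and period bounded by the multiplicative order of $2$ modulo $q$, hence by $2^{\measure{c}}$, and an automaton that walks through the integer digits (admitting any leading zeros), then through the pre-period and loops forever on the period, has size $2^{\BigO{\measure{c}}}$. For \textbf{(2)}, a small case split on the sign of track $i$ suffices: if it is $+$, enforce $w_i = w_j$ exactly as in Lemma~\ref{lem:equality}; if it is $-$, enforce that $\dec(w_j) = 0$, taking into account that $0$ admits the representations $\pm.\,0^\omega$. This needs only constantly many states.

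For \textbf{(3)} I rewrite $c \cdot \dec(w_i) = \dec(w_j)$ with $c = p/q$ as $p \cdot \dec(w_i) - q \cdot \dec(w_j) = 0$. Since we read high-order bits first but additive carries propagate from low to high, the carries are guessed nondeterministically: at each bit position the state stores an integer $\gamma$ from an interval of size $\BigO{p+q} = 2^{\BigO{\measure{c}}}$ which represents the not-yet-accounted-for value flowing out of the positions to the right. The transition at a position with observed bits $m_i, m_j$ has to verify, modulo $2$, that $\gamma + p \cdot m_i - q \cdot m_j$ is even, and the new carry is obtained by halving the result. Signs are read once and baked into the $\pm$ of $p,q$ handled in the state. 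At the transition from integer to fractional part the carry must be $0$ (so that nothing remains on the left), and in the fractional phase the automaton eventually enters the single accepting sink state $\gamma = 0$; the non-uniqueness of representations ($10^\omega$ versus $01^\omega$) is absorbed by a constantly larger set of sink-adjacent states. The total number of states is $2^{\BigO{\measure{c}}}$.

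For \textbf{(1)} I reduce to the ternary case $x = y + z$, for which a constant-size \FGWNBA $\Aut{B}_+$ is given by exactly the same carry-chain construction as in (3) with $p = q = 1$. The $n$-ary sum $\dec(w_i) = \sum_h \dec(w_{i_h})$ is then assembled by iterated composition using fresh auxiliary tracks $t_1, \dots, t_{n-2}$: enforce $t_1 = w_{i_1} + w_{i_2}$, $t_h = t_{h-1} + w_{i_{h+1}}$ for $2 \le h \le n-2$, and $w_i = t_{n-2} + w_{i_n}$, combining via Lemma~\ref{lem:product}, matching tracks via Lemma~\ref{lem:composition}, and finally eliminating all $t_h$ via Lemma~\ref{lem:projection}. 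Since $\Aut{B}_+$ has constantly many states and each composition multiplies sizes, the result has $\size{\Aut{B}_+}^{n-1} = 2^{\BigO{n}} \subseteq 2^{\BigO{k}}$ states, and the \FGWNBA property is preserved at each step because all the operations used preserve it.

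The main obstacle is the sign-magnitude nature of the encoding, which means that at the moment high-order bits are read we do not yet know whether positive contributions or negative ones will dominate, and hence cannot decide the sign of the result: this is resolved by guessing the sign of each intermediate track at the start and by permitting the carry to take negative as well as positive values, which is what forces the carry range in (3) to be symmetric and the extra constant blow-up in (1). The second subtlety is the non-uniqueness of representations on the fractional tape, which must be accounted for whenever a carry chain is about to terminate; the same trick that works for Lemma~\ref{lem:equality} handles this with only a constant number of additional states.
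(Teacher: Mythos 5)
Your decomposition agrees with the paper's for parts (1) and (2): a small ternary adder lifted to $n$-ary sums via auxiliary tracks, iterated intersection and a final projection (giving $2^{\BigO{n}}\subseteq 2^{\BigO{k}}$ states), and ReLU as a sign-based case split between equality and the zero constant. For parts (3) and (4) you take a genuinely different route. The paper builds $\Aut{A}^k_{j=\mathsf{mult}(c,i)}$ compositionally: repeated doubling along the binary expansion of an integer $c$ (each doubling being an instance of addition on a fresh track), summing the tracks for the set bits, and handling $c=n/d$ via $n\cdot x = d\cdot y$; it then obtains $\mathsf{const}(c)$ by composing $\mathsf{mult}(c^{-1},\cdot)$ with $\mathsf{const}(1)$. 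You instead build a single automaton for the linear equation $p\,\dec(w_i)-q\,\dec(w_j)=0$ with a bounded integer residual as state, and hard-code the eventually periodic binary expansion of $c$ for the constant automaton. Both give the claimed bounds; your constructions are more self-contained and closer to the classical real-arithmetic automata, while the paper's stay entirely inside its closure lemmas and reuse the adder and equality automata.

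Two boundary conditions in your carry automaton are, however, wrong as stated and a literal implementation would reject valid instances. First, you demand that the carry be $0$ at the dot ``so that nothing remains on the left''; but carries do cross the dot: for $c=2$, $\dec(w_i)=0.5$, $\dec(w_j)=1$ the fractional computation carries a $1$ into the units place. The zero condition belongs at the leftmost position (the first symbol read), and at the dot the residual may be anything in the bounded range. Second, you require the fractional phase to end in a sink with $\gamma=0$; but for $c=3$, $\dec(w_i)=1/3=\dec(+.(01)^\omega)$, $\dec(w_j)=1$, the residual cycles forever through nonzero values, so the correct acceptance condition is that the run remains forever inside the bounded set of live residuals, all of which must be accepting --- exactly the shape of the accepting SCC in the paper's Fig.~\ref{fig:addition}. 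Since you build the ternary adder for part (1) by the same carry-chain construction with $p=q=1$, both defects propagate to the addition automaton. The fixes are routine and do not affect the size bounds, but they are needed; note also that your handling of the dual representations $10^\omega$ versus $01^\omega$ is only gestured at, whereas the paper deals with it explicitly by post-composing with the equality automaton of Lemma~\ref{lem:equality}.
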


Now, we lift these constructions to build the \FGWNBA $\Aut{A}_v$ representing the computation of a node $v$ in a DNN. 
%%The construction procedure is given in Algo.~\ref{sec:translation;algo:node_build}. \input{assets/node_build_algo}

\begin{lemma}
	\label{sec:translation;lem:node}
	\label{lem:node}
	Let $k \ge 2$,  $h < j \le k$, and $v$ be a DNN-node computing $ \relu(b + \sum_{i=1}^h c_i x_i)$. 
	There is a \FGWNBA $\Aut{A}^k_{j=v(1, \dotsc, h)}$ of size $2^{\BigO{\measure{v}}}$ s.t.\ 
	$R(\Aut{A}^k_{j=v(1, \dotsc, h)}) = \{w \in \WF{\Sigma}{k} \mid \dec(w_{j}) =  \relu(b + \sum_{i=1}^h c_i \cdot \dec(w_i))\}$.
\end{lemma}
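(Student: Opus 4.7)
The plan is to realise the function computed by node $v$ as a conjunction of elementary arithmetic constraints, each handled by one of the automata from Lemma~\ref{lem:help}, wired together through auxiliary tracks that are projected away at the end. Concretely, I work on $K := k + h + 2$ tracks: the original $k$, plus $h$ auxiliary tracks $k+1, \ldots, k+h$ intended to carry the products $c_i \cdot \dec(w_i)$, one track $k+h+1$ for the constant $b$, and one track $k+h+2$ for the sum $b + \sum_{i=1}^h c_i \dec(w_i)$ fed into the $\relu$.

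On this $K$-track alphabet I assemble, as a single \FGWNBA, the conjunction of the following constraints, each obtained by a ``padded'' version of the corresponding automaton from Lemma~\ref{lem:help} (padding means widening every transition label with arbitrary symbols on the tracks not mentioned by the constraint, which keeps the state set and the eventually-always-weak structure intact): for each $i \in \{1,\ldots,h\}$, a multiplication automaton enforcing $\dec(w_{k+i}) = c_i \cdot \dec(w_i)$ of size $2^{\BigO{\measure{c_i}}}$ by Lemma~\ref{lem:help}(\ref{lem:help;const_mult}); a constant automaton enforcing $\dec(w_{k+h+1}) = b$ of size $2^{\BigO{\measure{b}}}$ by Lemma~\ref{lem:help}(\ref{lem:help;const}); an addition automaton enforcing $\dec(w_{k+h+2}) = \sum_{i=1}^{h+1} \dec(w_{k+i})$ of size $2^{\BigO{h}}$ by Lemma~\ref{lem:help}(\ref{lem:help;add}); and a $\relu$ automaton enforcing $\dec(w_j) = \relu(\dec(w_{k+h+2}))$ of size $\BigO{1}$ by Lemma~\ref{lem:help}(\ref{lem:help;relu}). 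Using closure of \FGWNBA under intersection (the standard product construction, stated for $\cap$ earlier in the paper), the resulting conjunction automaton has size at most the product of the individual sizes.

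The final step is to apply Lemma~\ref{lem:projection} with $\pi = (1, 2, \ldots, k)$ to eliminate the $h+2$ auxiliary tracks, which yields $\Aut{A}^k_{j=v(1,\ldots,h)}$ and preserves size up to a constant factor. Correctness follows because a $K$-track well-formed word satisfies all four constraints simultaneously iff the values on the auxiliary tracks are forced to be, successively, the products, the bias, their sum, and finally the $\relu$ thereof, which is exactly $\dec(w_j) = \relu(b + \sum_i c_i \dec(w_i))$. For the size bound, the product gives $\BigO{1} \cdot 2^{\BigO{h}} \cdot 2^{\BigO{\measure{b}}} \cdot \prod_{i=1}^h 2^{\BigO{\measure{c_i}}}$; since $\measure{c_i} \ge 1$ entails $h \le \sum_{i=1}^h \measure{c_i} \le \measure{v}$, this simplifies to $2^{\BigO{\measure{v}}}$, as required.

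The only point requiring real care is the padding step, since Lemma~\ref{lem:help} is stated for a fixed arity: formally, I either rewrite each transition relation so that the new tracks accept any symbol compatible with well-formedness, or equivalently take a Cartesian product (Lemma~\ref{lem:product}) with $\Aut{A}_{\mathsf{wf}}^{h+2}$ on the auxiliary tracks and then permute tracks via the projection of Lemma~\ref{lem:projection}. Either route is routine and does not change the state count asymptotically, so the size estimate above goes through unchanged.
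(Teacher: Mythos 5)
Your proposal is correct and follows essentially the same route as the paper's own proof: the same $h+2$ auxiliary tracks (products $c_i x_i$, bias $b$, their sum), the same intersection of the multiplication, constant, addition and $\relu$ automata from Lemma~\ref{lem:help}, followed by projection onto the first $k$ tracks, with the same size accounting yielding $2^{\BigO{\measure{v}}}$. Your explicit discussion of padding the fixed-arity automata to the larger track count is a detail the paper leaves implicit but does not constitute a different approach.
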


\begin{proof}
Note that $\Aut{A}^k_{j=v(1, \dotsc, h)}$ is supposed to work over $k$-track words in which the first $h$ tracks contain inputs $x_1,\ldots,x_h$, and the node's output is expected in the $(j-h)$-th output track which is the $j$-th overall. 
Let $\Aut{A}^k_{j=v(1, \dotsc, k)}$ be equal to
$\big((\bigcap_{i=1}^h  \Aut{A}^{g+2}_{k+i=\mathsf{mult}(c_i,i)}) \cap 
\Aut{A}^{g+2}_{g+1=\mathsf{const}(b)} \cap \Aut{A}^{g+2}_{g+2=\mathsf{add}(k+1,\ldots,g+1)} \cap
\proj{\Aut{A}^{g+2}_{j=\mathsf{relu}(g+2)}\big)}{1,\ldots,k}$.
where $g = k+h$. It uses $h+2$ additional and intermediate tracks that hold, respectively, for input values
$x_1,\ldots,x_h$ in the first $h$ tracks, the values $c_1 \cdot x_1,\ldots,c_h\cdot x_h$, the bias $b$, and their sum. By also 
insisting that the $j$-th track holds the ReLU-value of that sum, we model exactly the node's computation. 
Since it is constructed using $h+2$ intersections of \FGWNBA of size that is either constant (for the addition) or
of size bounded by the involved rational constants (for the multiplication and the bias), the overall size can be
estimated as $2^{\BigO{\measure{v}}}$. \qed 
\end{proof}

Using the inductive view on DNN described above, we are now set to provide the translation
of DNN into input-output-equivalent \FGWNBA.

\begin{theorem}
	\label{th:dnn}
	\label{thm:dnn}
	Let $N$ be a DNN with input dimension $m$ and output dimension $n$. There is a \FGWNBA
	$\Aut{A}_N$ of size $2^{\BigO{\measure{N}}}$ s.t.\ 
	$R(\Aut{A}_N) = \{w \in \WF{\Sigma}{m+n} \mid N(\dec(w_1), \ldots, \dec(w_m)) =
	(\dec(w_{m+1}), \ldots, \dec(w_{m+n}))\}$.
\end{theorem}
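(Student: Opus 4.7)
The plan is to carry out a two-stage construction that mirrors the inductive view of a DNN: first assemble a \FGWNBA for each individual layer from the node-level automata of Lemma~\ref{lem:node}, then chain those layer automata together via the composition operation of Lemma~\ref{lem:composition}. No final projection is needed, since composition already quantifies out intermediate activations.

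For the layer stage, fix a layer $l$ with input dimension $m'$, output dimension $n'$, and nodes $v_1, \dotsc, v_{n'}$. I work over $m' + n'$ tracks, with the first $m'$ carrying the inputs and the last $n'$ holding the node outputs. For each $j \in \{1, \dotsc, n'\}$ Lemma~\ref{lem:node} provides $\Aut{A}^{m'+n'}_{m'+j = v_j(1, \dotsc, m')}$ of size $2^{\BigO{\measure{v_j}}}$, and intersecting all $n'$ of them yields a \FGWNBA $\Aut{A}_l$ whose relation is exactly the graph of $l$. Since intersection multiplies sizes and $\sum_j \measure{v_j} = \measure{l}$, the combined size is $2^{\BigO{\measure{l}}}$.

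For the chaining stage, let $l_1, \dotsc, l_k$ be the layers of $N$, with $d_0 = m$, output dimension $d_i$ of $l_i$, and $d_k = n$. Set $\Aut{A}_{N,1} := \Aut{A}_{l_1}$ and inductively $\Aut{A}_{N,i} := \Aut{A}_{N,i-1} \circ_{d_{i-1}} \Aut{A}_{l_i}$, so that the last $d_{i-1}$ tracks of $\Aut{A}_{N,i-1}$ (encoding the output of $l_{i-1}$) are identified with the first $d_{i-1}$ tracks of $\Aut{A}_{l_i}$ (its input), while the intermediate vectors are existentially abstracted away. Thus $\Aut{A}_{N,i}$ has $m + d_i$ tracks and recognises the graph of $l_i \circ \cdots \circ l_1$. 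The result $\Aut{A}_N := \Aut{A}_{N,k}$ then operates on $m + n$ tracks and defines exactly the desired relation. The size bound follows by a telescoping product: $|\Aut{A}_N| = \BigO{\prod_i |\Aut{A}_{l_i}|} = \prod_i 2^{\BigO{\measure{l_i}}} = 2^{\BigO{\sum_i \measure{l_i}}} = 2^{\BigO{\measure{N}}}$, using that the implicit $\BigO$-constants from Lemma~\ref{lem:node} are uniform across layers.

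The main obstacle is purely bookkeeping: naming tracks consistently through successive compositions, and ensuring that the identification performed by $\circ_k$ matches real-number equality rather than word-level equality. The latter point is already absorbed into the node-level automata, since Lemma~\ref{lem:node} and the building blocks of Lemma~\ref{lem:help} are defined modulo $\dec$ and tolerate the different binary representations of the same real via the auxiliary automata $\Aut{A}^k_{i=j}$. Well-formedness of all accepted words is preserved throughout because every intermediate automaton only accepts words in $\WF{\Sigma}{\cdot}$, and both intersection and composition respect this restriction. Neither correctness nor the size bound therefore requires anything beyond a careful invocation of the lemmas already in place.
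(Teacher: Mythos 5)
Your proposal is correct and follows essentially the same route as the paper: build each layer automaton as the intersection of the node automata from Lemma~\ref{lem:node}, chain the layers with the generalised composition $\circ_k$ of Lemma~\ref{lem:composition}, and obtain the $2^{\BigO{\measure{N}}}$ bound from the product of the per-layer exponentials. Your additional remarks on well-formedness and on composition identifying tracks up to $\dec$ (via the equality automata) are sound and merely make explicit what the paper leaves implicit.
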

\begin{proof}
	Assume that $N$ has $k$ layers $l_1,\ldots,l_k$. For each layer $l_i$, we construct an \FGWNBA $\Aut{A}_i$ recognising the 
	relation between inputs to this layer and immediate outputs computed by it. 
	Take a layer $l_i = (v^i_1, \ldots, v^i_{n_i})$, and assume that it takes $m_i$ inputs (which must also be the number
	of outputs of the previous layer). Obviously, it produces $n_i$ outputs as this is the number of nodes in this layer.
	Moreover, we have $m_1 = m$ and $n_k = n$, i.e.\ the inputs to the DNN are inputs to the first layer, and the outputs of the
	last layer are the outputs of the DNN. 
	The desired \FGWNBA can be obtained from the \FGWNBA for the nodes $v^i_j$ of this layer according to Lemma~\ref{lem:node} 
	as $\Aut{A}_i := \bigcap_{j=1}^{n_i} \Aut{A}^{m_i+n_i}_{m_i+j=v^i_j(1, \dotsc, m_i)}$. This produces a \FGWNBA with $m_i$ inputs and $n_i$ outputs 
	which contains, in the $j$-th output track, the result of the computation done by the $j$-th node in this layer on the
	inputs contained in the $m_i$ input tracks.
	Finally, a \FGWNBA for the relation computed by the DNN $N$ is then obtained simply as 
	$\Aut{A}_N := \Aut{A}_1 \circ_{n_1} \ldots \circ_{n_{k-1}} \Aut{A}_k$. Note that relation composition is in fact 
	associative. 
	The size of $\Aut{A}_N$ can be bounded by $2^{\BigO{\measure{N}}}$ because of the following observation: for a layer of 
	$n$ nodes $v$ we need to form a product of $n$ automata, each of size bounded by $2^{\BigO{\measure{v}}}$, i.e.\ we get
	a size of $2^{\BigO{n \cdot \measure{v}}}$ whose exponential corresponds to the size that a layer requires in a DNN
	representation. Likewise, forming the composition for $k$ layers results in an \FGWNBA of size bounded by 
	$2^{\BigO{k\cdot n \cdot \measure{v}}} = 2^{\BigO{\measure{N}}}$ where $k \cdot n$ is an upper bound for the
	number of nodes. \qed
\end{proof}

\section{Use Cases: Analysing DNN Using \FGWNBA}
\label{sec:verification}
% !TEX root =  ../main.tex

We consider two topics -- formal verification and interpretation of DNN.
The former is concerned with different safety properties, among which
adversarial robustness and output reachability guarantees belong to the most important ones. 
Interpretation of DNN is concerned with techniques generating 
human-understandable explanations for the behaviour of DNN, for example,
an explanation why a DNN computes some specific output given some input.

\subsection{Verifying DNN Using \FGWNBA}

%We start with a short introduction, including formal definitions, of adversarial robustness properties (ARP)
%and output reachability properties (ORP).
%\begin{figure}
%	\centering
%	\input{assets/arp_orp}
%	\caption{Schematic depiction of the properties described by ARP and ORP. The dashed parts indicate the 
%		necessary connections between input and output sets defined by the ARP respectively ORP 
%		such that they are satisfied by the DNN $N$.}
%	\label{sec:verification;fig:arp_orp}
%\end{figure}
\subsubsection*{Adversarial robustness.} This is exclusively concerned with \emph{classifier DNN}.
%\footnote{There is also a more general notion of robustness regarding DNN. See \cite{HKRSSTWY20_survey}
%for details. However, adversarial robustness as defined here is by far the dominant one.} 
A classifier DNN is used to assign to a given input one of the \emph{classes} $\{c_1, \dotsc, c_k\}$. 
Typically, such a classifier $N$ is built as follows: $N$ consists of a DNN $N'$ with output dimension $k$ and 
an additional \emph{softmax layer}. It consumes the output $(y_1, \dotsc, y_k)$ of $N'$ and 
computes a probability for each class $c_i$. The input $\bs{x}$ is then said to be classified into $c_i$ if 
its probability is maximal. However, the actual assigned class can be directly inferred from the output of $N'$ by 
assigning class $c_j$ to $\bs{x}$ such that $j$ is a maximal output dimension.
A formal definition of adversarial robustness relies on a distance measure on real vectors. The one that is commonly
used is the one induced by the $1$-norm of vectors \cite{HKRSSTWY20_survey}. 
Let $\bs{r} \in \Reals^m$. Its \emph{$1$-norm} is $||\bs{r}||_1 = \sum^m_{i=1} |r_i|$.
It induces the \emph{Manhattan distance} of $\bs{r}$ and some $\bs{s} \in \Reals^m$, defined as 
$||\bs{r}, \bs{s}||_1 = \sum^m_{i=1} |r_i - s_i|$. 
The \emph{$d$-neighbourhood} with $d \in \Reals^{\geq 0}$ of $\bs{r}$ is defined as the set 
$\{\bs{s} \in \Reals^m \mid ||\bs{r}, \bs{s}||_1 \leq d\}$.
Let $N$ be a classifier with input dimension $m$ and output dimension $n$ and $1 \leq h \leq n$.
We call a triple $P=(\bs{r}, d, h)$ with $\bs{r} \in \Rats^m, d \in \Rats$ an
\emph{adversarial robustness property} (ARP) and say that $N$ satisfies $P$, written $N \models P$, if $N(\bs{r}')_h > N(\bs{r'})_{h'}$ for all $h' \ne h$ and
all $\bs{r}'$ in the $d$-neighbourhood of $\bs{r}$. 
In other words, the entire $d$-neighbourhood is classified as belonging to class $c_h$.
We measure the size of $P$ by $\measure{P} = \measure{\bs{r}} + \measure{d} + \measure{h}$ where
$\measure{\bs{r}}$ is the sum of the measure of its elements $\measure{r_i}$.
%Fig.~\ref{sec:verification;fig:arp_orp} depicts this intuitively. 
%Fig.~\ref{sec:verification;fig:arp_orp} intuitively depicts the situation of a DNN $N$ satisfying some ORP 
%$P = (\varphi_{\text{in}}, \varphi_{\text{out}})$. 
%The valid inputs according to $\varphi_{\text{in}}$ are specified by the pentagon in the input region of $N$ and valid outputs
%according to $\varphi_{\text{out}}$ by the rhombus in the output region of $N$. Now, $N$ satisfies $P$ if the situation indicated by the dashed
%arrow holds, namely that there is a point in the pentagon such that the corresponding output lies in the rhombus.
%\begin{figure}
%	\centering
%	\input{assets/verification_scheme}
%	\caption{Schematic depiction of automata-based verification or properties like ARP and ORP.}
%	\label{sec:verification;fig:schema}
%\end{figure}

The construction of input-output equivalent \FGWNBA, established in Sect.~\ref{sec:translation}, can be used to verify ARPs. 
For a DNN $N$ and and ARP $P$ we combine three \FGWNBA $\Aut{A}^P_{\mathsf{in}}, \Aut{A}^P_{\mathsf{out}}$ and 
$\Aut{A}_N$  to a \FGWNBA that can be used to check whether $N \models P$ holds. Based on the explanations above we disregard the $\mathrm{softmax}$ layer 
of $N$, giving a usual DNN. Then $\Aut{A}_N$ is defined by Thm.~\ref{th:dnn}.  
The other two automata accept only the valid input, respectively output vectors according to $P$. 

The automaton $\Aut{A}^P_{\mathsf{in}}$ should accept words corresponding to vectors $\bs{x}$ that are included in the 
$d$-neighbourhood of $\bs{r}$, i.e.\ for which $\sum_{j=1}^{m} |r_j +  (-1\cdot x_j)| \leq d$ holds. 
From Sect.~\ref{sec:translation} we know that there are automata recognising the operations of addition and multiplication
by the constant $-1$. 

%Thus,
%it is only left to argue that there is 2-track $A_{|\cdot|}$ and 1-track $A_{\leq d}$ defined as indicated by their names.

\begin{lemma}
\label{lem:arp_in}
	Let $k \ge 2$, $1 \le m < i \le k$, $\bs{r} \in \Rats^m$, $d \in \Rats$ with $d > 0$ and $P=(\bs{r},d,i)$. There is a \FGWNBA
	$\Aut{A}^{k,P}_{\mathsf{in}}$ of size $2^{\BigO{\measure{P}}}$ s.t.\ for all $w \in \WF{\Sigma}{k}$ we have: 
	$w \in R(\Aut{A}^{k,P}_{\mathsf{in}})$ iff $\sum_{i=1}^{m} |r_i +  (-1\cdot \dec(w_i))| \leq d$.
\end{lemma}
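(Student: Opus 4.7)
The plan is to build $\Aut{A}^{k,P}_{\mathsf{in}}$ by extending $w$ with auxiliary tracks carrying the intermediate arithmetic quantities $-x_i$, $r_i - x_i$, $|r_i - x_i|$ and their cumulative sum, together with a few further tracks used to certify the inequality. Each constraint linking these tracks is enforced by one of the gadgets of Lemma~\ref{lem:help} or Lemma~\ref{lem:equality}, all are intersected by Lemma~\ref{lem:join}, and finally the auxiliary tracks are removed by Lemma~\ref{lem:projection}.

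For each $i \in \{1,\ldots,m\}$ I add three auxiliary tracks. The first is forced by $\Aut{A}^{\cdot}_{\cdot=\mathsf{mult}(-1,i)}$ to hold $-x_i$; the second is forced, via an instance of $\Aut{A}^{\cdot}_{\cdot=\mathsf{const}(r_i)}$ and $\Aut{A}^{\cdot}_{\cdot=\mathsf{add}(\ldots)}$, to hold $r_i + (-x_i) = r_i - x_i$; the third is forced to hold $|r_i - x_i|$. Since there is no primitive absolute-value gadget, I rely on the identity $|y| = \relu(y) + \relu(-y)$, which on two further scratch tracks (projected away at the end) reduces to two applications of $\Aut{A}^{\cdot}_{\cdot=\mathsf{relu}(\cdot)}$ and one addition. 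A single further addition instance places $T := \sum_{i=1}^m |r_i - x_i|$ on one more track.

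To express the inequality $T \leq d$ I use the rewriting $T \leq d \iff d - T = \relu(d - T)$. Concretely, I add tracks for $-T$, for the constant $d$, for their sum $d - T$, and for $\relu(d - T)$, and then enforce the equality of the last two via $\Aut{A}^{\cdot}_{i=j}$ from Lemma~\ref{lem:equality}. Intersecting all these gadgets with the well-formedness automaton $\Aut{A}_{\mathsf{wf}}^{\cdot}$ and projecting away all auxiliary tracks yields $\Aut{A}^{k,P}_{\mathsf{in}}$, whose language over $\WF{\Sigma}{k}$ is exactly the set of $w$ with $\sum_{i=1}^m |r_i + (-1\cdot\dec(w_i))| \leq d$.

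For the size bound, every gadget is of size $\BigO{1}$ except for those carrying a rational constant: the $m$ constant-and-addition blocks for the $r_i$ contribute a factor $2^{\BigO{\measure{r_i}}}$ each, and the block for $d$ contributes $2^{\BigO{\measure{d}}}$. Since intersection multiplies sizes (Lemma~\ref{lem:join}) and projection preserves them (Lemma~\ref{lem:projection}), the total is $2^{\BigO{\sum_i \measure{r_i} + \measure{d}}} = 2^{\BigO{\measure{P}}}$, as required. The main conceptual obstacle is that neither absolute value nor the comparison with $d$ is directly available from Lemma~\ref{lem:help}; the two algebraic reductions above are what make the construction fit cleanly into the existing toolkit, and once they are in place the rest is a routine combination of the closure properties already established.
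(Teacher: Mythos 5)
Your construction is correct, and its overall architecture coincides with the paper's: introduce auxiliary tracks for $-x_i$, $r_i$, $r_i-x_i$, $|r_i-x_i|$, their sum and the constant $d$, enforce each link with a gadget, intersect everything and project the scratch tracks away, with the size dominated by the constant/multiplication automata for $r_i$ and $d$. Where you genuinely diverge is in the two sub-gadgets that Lemma~\ref{lem:help} does not provide directly. For the absolute value, the paper builds a bespoke constant-size automaton $\Aut{A}^k_{j=\mathsf{abs}(i)}$ by swapping the sign transitions of the equality automaton, whereas you derive it from the identity $|y|=\relu(y)+\relu(-y)$; for the final comparison, the paper invokes a dedicated order automaton $\Aut{A}^k_{i\le j}$ (Lemma~\ref{lem:smaller} in the appendix), whereas you encode $T\le d$ as $d-T=\relu(d-T)$ and close with the equality automaton of Lemma~\ref{lem:equality}. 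Both of your reductions are algebraically sound and have the virtue of staying entirely within the primitives already established in Section~\ref{sec:translation}, at the cost of a handful of extra scratch tracks; the paper's bespoke automata are more direct and, in the case of the comparison automaton, reusable elsewhere (it is needed again for the output automaton in Theorem~\ref{thm:arp}). One small caveat, which applies equally to the paper's own argument: the $m$-summand addition automaton has size $2^{\BigO{m}}$ in the number of tracks, so the blanket claim that only the constant-carrying gadgets are non-constant is a slight simplification, but it does not affect the stated bound of $2^{\BigO{\measure{P}}}$ as intended here.
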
 

Given this construction, Theorem \ref{thm:dnn} and the fact that \FGWNBA are closed under intersection, 
we get that for each ARP there is a \FGWNBA which recognises its validity.
\begin{theorem}
\label{thm:arp}
Let $N$ be a DNN with $m$ inputs and $n$ outputs, and $P = (\bs{r},d,i)$ be an ARP with $1 \le i \le n$. There is a \FGWNBA 
$\Aut{A}^{N,P}_{\mathsf{arp}}$ of size $2^{\BigO{\measure{N} + \measure{P}}}$ s.t.\ 
$R(\Aut{A}^{N,P}_{\mathsf{arp}}) = \emptyset$ iff $N \models P$. 
\end{theorem}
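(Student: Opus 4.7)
The plan is to realise $N \not\models P$ as non-emptiness of an intersection of three \FGWNBA working over $(m+n)$-track words: one restricting the input tracks to the $d$-neighbourhood of $\bs{r}$, one enforcing the input-output relation of $N$, and one witnessing the violation of the classification condition on the output tracks.

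More concretely, by definition $N \models P$ fails iff there is some $\bs{x}$ with $\|\bs{r},\bs{x}\|_1 \le d$ and some index $i' \in \{1,\ldots,n\}\setminus\{i\}$ such that $N(\bs{x})_{i'} \ge N(\bs{x})_i$. I would therefore take $\Aut{A}^{m+n,P}_{\mathsf{in}}$ from Lemma~\ref{lem:arp_in} (instantiated with $k = m+n$), take $\Aut{A}_N$ from Theorem~\ref{thm:dnn}, and add a comparison automaton $\Aut{A}^{m+n}_{j \le j'}$ that accepts exactly those $w \in \WF{\Sigma}{m+n}$ with $\dec(w_j) \le \dec(w_{j'})$. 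Then I would set
\[
  \Aut{A}^{N,P}_{\mathsf{arp}} \;:=\; \Aut{A}^{m+n,P}_{\mathsf{in}} \,\cap\, \Aut{A}_N \,\cap\, \bigcup_{i' \in \{1,\ldots,n\}\setminus\{i\}} \Aut{A}^{m+n}_{m+i \,\le\, m+i'}.
\]
Closure of \FGWNBA under intersection and union (noted in Section~\ref{sec:prelim}) makes this construction well-defined. Correctness is then immediate: an accepted word encodes a tuple $(\bs{x}, N(\bs{x}))$ with $\bs{x}$ in the $d$-neighbourhood of $\bs{r}$ and some $i' \ne i$ witnessing $N(\bs{x})_{i'} \ge N(\bs{x})_i$, i.e.\ a counterexample to $N \models P$; conversely any such counterexample is accepted. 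Hence emptiness captures $N \models P$.

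For the size bound, $\Aut{A}^{m+n,P}_{\mathsf{in}}$ is of size $2^{\BigO{\measure{P}}}$ by Lemma~\ref{lem:arp_in} and $\Aut{A}_N$ is of size $2^{\BigO{\measure{N}}}$ by Theorem~\ref{thm:dnn}. Each $\Aut{A}^{m+n}_{m+i \le m+i'}$ is of constant size, and the union over $n-1 \le \measure{N}$ tracks contributes only a linear factor, swallowed by the exponential. The product construction then gives the overall bound $2^{\BigO{\measure{N} + \measure{P}}}$.

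The main obstacle is the comparison automaton $\Aut{A}^{m+n}_{j \le j'}$. It is not explicitly provided in the lemmas so far, although it is close in spirit to the equality automaton of Lemma~\ref{lem:equality}. The subtlety is again the non-uniqueness of representations: a $\le$-check cannot just scan digits left-to-right and declare a winner at the first difference, since $+0.1^\omega$ and $+1.0^\omega$ denote the same number. I would therefore design the automaton so that, after comparing signs and integer parts (using the usual complement-style order) and then fractional digits from the most significant bit onward, whenever a differing bit is observed the automaton nondeterministically commits to either ``strictly less'' or ``equal-so-far'', verifying the latter through a tail-check against the $10^\omega$/$01^\omega$ ambiguity as in Lemma~\ref{lem:equality}. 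This yields a constant-size \FGWNBA and completes the construction.
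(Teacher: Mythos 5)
Your construction is essentially identical to the paper's: it also forms the intersection of $\Aut{A}_N$, $\Aut{A}^{k,P}_{\mathsf{in}}$ and a union $\bigcup_{i'\ne i}\Aut{A}^{k}_{m+i\le m+i'}$ witnessing that the $i$-th output is not strictly maximal, with the same size accounting. The comparison automaton you worry about is supplied by the paper in an appendix lemma (as $\Aut{A}^k_{i<j}\cup\Aut{A}^k_{i=j}$, of constant size), and your sketched direct construction of it is sound as well.
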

\begin{proof}
Let $k := m+n$. Note that $\Aut{A}_N$ is a $k$-track \FGWNBA recognising the input-output behaviour of $N$. Let 
$\Aut{A}^{N,P}_{\mathsf{arp}} := \Aut{A}_N \cap \Aut{A}^{k,P}_{\mathsf{in}} \cap \overline{\Aut{A}}^{k,P}_{\mathsf{out}}$ where
$\overline{\Aut{A}}^{k,P}_{\mathsf{out}} := (\bigcup_{i'=1}^{i-1} \Aut{A}^k_{m+i \le m+i'}) \cup (\bigcup_{i'=i+1}^{n} \Aut{A}^k_{m+i \le m+i'})$ accepts a word with 
$n$ output tracks if the number encoded in the $i$-th output is not greater than those in any other output track. 
The size claim is a straightforward result from the sizes of the subautomata 
$\Aut{A}_N$, $\Aut{A}^{k,P}_{\mathsf{in}}$ and $\overline{\Aut{A}}^{k,P}_{\mathsf{out}}$.
Consequently, $\Aut{A}^{N,P}_{\mathsf{arp}}$ accepts a $k$-track word if it encodes a vector $\bs{x} \in \Reals^m$ in its
first $m$ tracks that is within the $d$-neighborhood of $\bs{r}$, s.t. the following $n$ tracks encode $N(\bs{x})$ and 
their $i$-th component is not strictly maximal. This is the case if and only if $N \not\models P$. \qed
\end{proof}

\subsubsection*{Output reachability.}
This is used to certify that specific ``misbehaviour'' of DNN does not occur. 
A formal definition hinges on a notion of valid inputs and outputs. Commonly, this is 
done using specifications defining (convex) sets of real vectors. 
A \emph{vector specification} $\varphi$ over variables $x_1,\ldots,x_k$ is a conjunction $\varphi$ of statements of the form 
$(\sum_{i=1}^k c_i \cdot x_i) \le b$ where $c_i,b \in \Rats$. 
Let $\bs{r} = (r_1, \dotsc, r_k) \in \Reals^k$. We say that $\bs{r}$ satisfies $\varphi$ if each 
inequality $t \leq b$ in $\varphi$ is satisfied in real arithmetic when each $x_i$ is given the value $r_i$. 
Let $N$ be a DNN with input dimension $m$ and output dimension $n$,
let $\varphi_{\text{in}}$ be a vector specification over $x_1, \dotsc, x_m$ and let $\varphi_{\text{out}}$ be a vector specification over 
$y_1, \dotsc, y_n$. We call the tuple $P = (\varphi_{\text{in}}, \varphi_{\text{out}})$ an \emph{output reachability property (ORP)} 
and say that $N$ satisfies $(\varphi_{\text{in}}, \varphi_{\text{out}})$, written $N \models P$, if there is $\bs{r} \in \Reals^m$ s.t.\ 
$\bs{r} \models \varphi_{\text{in}}$ and $N(\bs{r}) \models \varphi_{\text{out}}$. We define the size of $P$ by $\measure{P} =
\measure{\varphi_\text{in}} + \measure{\varphi_\text{out}}$ where the measure of a specification $\varphi$ is the sum of the measures of parameters $c_i, b$ ocurring in some inequality.
\begin{theorem}
\label{thm:orp}
Let $N$ be a DNN with $m$ inputs and $n$ outputs, and $P = (\varphi_{\text{in}},\varphi_{\text{out}})$ be an ORP. There is a 
\FGWNBA $\Aut{A}^{N,P}_{\mathsf{orp}}$ of size $2^{\BigO{\measure{N} + \measure{P}}}$ s.t.\ 
$R(\Aut{A}^{N,P}_{\mathsf{orp}}) = \emptyset$ iff $N \models P$. 
\end{theorem}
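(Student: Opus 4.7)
The plan is to mirror the approach used for Theorem \ref{thm:arp}: construct $\Aut{A}^{N,P}_{\mathsf{orp}}$ as the intersection of three $(m+n)$-track \FGWNBA. Setting $k := m+n$, I would take $\Aut{A}^{N,P}_{\mathsf{orp}} := \Aut{A}_N \cap \Aut{A}^{k,P}_{\mathsf{in}} \cap \Aut{A}^{k,P}_{\mathsf{out}}$, where $\Aut{A}_N$ is obtained from Theorem \ref{thm:dnn}, the automaton $\Aut{A}^{k,P}_{\mathsf{in}}$ accepts those $k$-track words whose first $m$ tracks encode some $\bs{r} \in \Reals^m$ with $\bs{r} \models \varphi_{\text{in}}$, and $\Aut{A}^{k,P}_{\mathsf{out}}$ accepts those whose last $n$ tracks encode some $\bs{s} \in \Reals^n$ with $\bs{s} \models \varphi_{\text{out}}$. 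By the semantics of intersection and of $\Aut{A}_N$, the intersection accepts precisely the encodings of input-output pairs $(\bs{r}, N(\bs{r}))$ with $\bs{r} \models \varphi_{\text{in}}$ and $N(\bs{r}) \models \varphi_{\text{out}}$, so $R(\Aut{A}^{N,P}_{\mathsf{orp}}) = \emptyset$ iff no such witness pair exists, matching the equivalence stated in the theorem.

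To build $\Aut{A}^{k,P}_{\mathsf{in}}$, I would process each conjunct $\sum_{i=1}^{m} c_i x_i \le b$ of $\varphi_{\text{in}}$ individually. For a single such inequality, I would introduce auxiliary tracks holding the products $c_i \cdot x_i$ via Lemma \ref{lem:help}.\ref{lem:help;const_mult}, their sum via Lemma \ref{lem:help}.\ref{lem:help;add}, and the constant $b$ via Lemma \ref{lem:help}.\ref{lem:help;const}; demand that the sum-track is at most the $b$-track using a \FGWNBA for the $\le$-relation on real encodings; and finally project away the auxiliary tracks via Lemma \ref{lem:projection}. Intersecting these per-inequality automata via Lemma \ref{lem:product} yields $\Aut{A}^{k,P}_{\mathsf{in}}$, and an analogous construction applied to the last $n$ tracks produces $\Aut{A}^{k,P}_{\mathsf{out}}$.

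For the size bound, $\Aut{A}_N$ has size $2^{\BigO{\measure{N}}}$ by Theorem \ref{thm:dnn}. Each constant-multiplication and constant automaton has size $2^{\BigO{\measure{c}}}$ by Lemma \ref{lem:help}, while addition, the $\le$-comparison, and projection introduce only polynomial overhead. Taking products over the conjuncts of $\varphi_{\text{in}}$ and $\varphi_{\text{out}}$ yields total sizes $2^{\BigO{\measure{\varphi_{\text{in}}}}}$ and $2^{\BigO{\measure{\varphi_{\text{out}}}}}$, and combining the three components via Lemma \ref{lem:product} keeps the overall size within $2^{\BigO{\measure{N} + \measure{P}}}$.

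The main obstacle is that the preliminaries supply an \emph{equality} automaton (Lemma \ref{lem:equality}) but not an \emph{inequality} one. I would supply this missing piece by nondeterministically splitting into two cases: if the two numbers compared are equal, defer to the equality construction; otherwise, guess which one is strictly larger, guess the position of the leading differing binary digit, and verify in finitely many states. The care required is to handle the multiple representations of a real (leading zeros, $+0.0^\omega$ vs. $-0.0^\omega$, and the $\ldots 10^\omega$ vs. $\ldots 01^\omega$ ambiguity), exactly as in the proof of Lemma \ref{lem:equality}; the resulting automaton has constant size and is easily verified to be an \FGWNBA.
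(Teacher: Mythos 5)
Your construction is essentially the paper's own proof: the paper likewise builds one \FGWNBA per linear inequality of $\varphi_{\text{in}}$ and $\varphi_{\text{out}}$ (using the constant-multiplication, addition, constant and $\le$-comparison automata --- the comparison automaton you re-derive is already supplied as Lemma~\ref{lem:smaller} in the appendix) and intersects all of these with $\Aut{A}_N$, with the same size accounting. One shared quirk worth noting: as you correctly derive, the resulting automaton is \emph{nonempty} iff a witness $\bs{r}$ exists, i.e.\ iff $N \models P$, so the theorem's literal claim ``$R(\Aut{A}^{N,P}_{\mathsf{orp}}) = \emptyset$ iff $N \models P$'' has its polarity flipped (an artefact of the ARP formulation, where the property is universal); the paper's own proof exhibits the same mismatch, so this is not a defect of your argument.
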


\begin{proof}
Similar to the constructions in Thm.~\ref{thm:arp}, one can build, given $k$ and 
a linear inequality $\psi = \sum_{i=1}^k c_i \cdot x_i \le b$ with rational constants, a \FGWNBA $\Aut{A}^\psi_{\mathsf{in}}$
that accepts a well-formed $k$-track word iff the first $k$ tracks encode numbers $x_1,\ldots,x_k$ that satisfy $\psi$. Likewise,
we can build such a \FGWNBA $\Aut{A}^\psi_{\mathsf{out}}$ that does the same for the last $n$ tracks. Note that the size of
these automata is exponential in the measure of the parameters $c_i,b$.
Then we get that $\Aut{A}^{N,P}_{\mathsf{orp}} := (\bigcap_{\psi \in \varphi_{\text{in}}} \Aut{A}^\psi_{\text{in}}) \cap \Aut{A}_N \cap
(\bigcap_{\psi \in \varphi_{\text{out}}} \Aut{A}^\psi_{\text{out}})$
accepts a word iff it encodes some $\bs{x}$ satisfying $\varphi_{\text{in}}$, s.t.\ $N(\bs{x})$ satisfies $\varphi_{\text{out}}$.
The size claim about $\Aut{A}^{N,P}_{\mathsf{orp}}$ is a straightforward result from the intersection and 
the size of $\Aut{A}_N$ and the specification automata. \qed
\end{proof}

\subsection{Interpreting DNN with \FGWNBA}

Zhang et al.\ \cite{ZhangTLT21_surveyinterp} present a
three-dimensional taxonomy for interpretation techniques: 
\emph{post-hoc} or \emph{ad-hoc} interpretation either generates explanations 
for common neural network models or focuses on constructing 
neural network models that improve interpretability. 
\emph{Examples, attribution, hidden semantics} or \emph{rules} characterise 
the type of explanation. \emph{Global, local} or \emph{semi-local} explanations
concern the model's overall behaviour, that of a single input value, resp.\ something in between.
In the following, we introduce a widely considered post-hoc, attribution and local interpretation approach.

We start with an example. Assume some image-classification task, for instance the task to distinguish pictures of dogs and pigs. 
Commonly, such a task is addressed using a model like \emph{Convolutional Neural Networks} \cite{KhanSZQ20_cnnsurvey} which processes a picture by
computing layer-by-layer higher-order features of the picture
and then classifies it based on these. A natural explanation for
the CNN's decision is the image's regions that the CNN focuses on for 
making its decision, classifying it as either a picture of a dog or of a pig.
For example, we would gain confidence in the CNN decision 
if we can prove that it focuses on the form of the snout of the animal (lengthy vs.\ flat) or the texture of its
outer contours (fluffy vs.\ smooth). In technical terms, the task
is to find the most important input dimensions, i.e.\ pixels of the image, that determine 
the output of the CNN. A widely used interpretation technique addressing this problem is called Integrated 
Gradient \cite{SundararajanTY17_intgrad}.
In the context of our general DNN model, we formulate the task of finding the most 
important features of an input as a decision problem: 
given a DNN $N$, some input $\bs{r} \in \Rats^m$ and $I \subseteq \{1,\ldots,m\}$, decide whether
for every $\bs{x} \in \Rats^k$ which equals $\bs{r}$ on the dimensions in $I$ we have $N(\bs{x}) = N(\bs{r})$.
In correspondence to \cite{BarceloM0S20_interprcompl}, we call
this problem \textsc{MSR} (for \emph{minimum sufficient reason}). 
An instance is of the form $P = (\bs{r},I)$. As before, we write $N \models P$
to indicate that $N$ satisfies the instance $P$. The measure $\measure{P}$ is given by the sum
of the measures of $\measure{r}$ and $\measure{I}$, which are defined in the obvious way.

\begin{theorem}
\label{thm:minsuffreason}
Let $N$ be a DNN with input dimension $m$ and output dimension $n$, and some $P = (\bs{r},I)$ 
with $\bs{r} \in \Rats^m$ 
and $I \subseteq \{1,\ldots,m\}$. There is a \FGWNBA $\Aut{A}^{N,P}_{\mathsf{msr}}$ of size $2^{\BigO{\measure{N} + \measure{P}}}$
s.t.\ $R(\Aut{A}^{N,P}_{\mathsf{msr}}) = \emptyset$ iff $N \models P$.
\end{theorem}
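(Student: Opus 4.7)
The plan is to construct $\Aut{A}^{N,P}_{\mathsf{msr}}$ so that it accepts exactly the encodings of \emph{counterexamples} to $N \models P$: inputs $\bs{x}$ with $x_i = r_i$ for all $i \in I$ yet $N(\bs{x}) \neq N(\bs{r})$. A first attempt might precompute $\bs{s} := N(\bs{r}) \in \Rats^n$ and constrain the output tracks via constant automata, but $\measure{\bs{s}}$ is not bounded linearly by $\measure{N} + \measure{\bs{r}}$ in general, so this risks violating the claimed size bound. Instead, the key idea is to evaluate $N$ \emph{twice in parallel}, once on a free input $\bs{x}$ and once on a pinned input forced to encode $\bs{r}$.

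The automaton will operate on $k = 2m + 2n$ tracks. Tracks $1,\ldots,m$ carry an arbitrary input $\bs{x}$; tracks $m+1,\ldots,2m$ are pinned to $\bs{r}$ by $\bigcap_{i=1}^{m} \Aut{A}^{k}_{m+i = \mathsf{const}(r_i)}$ (Lem.~\ref{lem:help}.\ref{lem:help;const}); tracks $2m+1,\ldots,2m+n$ are forced to carry $N(\bs{x})$ by an embedded copy of $\Aut{A}_N$ (Thm.~\ref{thm:dnn}) relating tracks $1,\ldots,m$ to tracks $2m+1,\ldots,2m+n$; and tracks $2m+n+1,\ldots,2m+2n$ carry $N(\bs{r})$ via a second embedded copy of $\Aut{A}_N$ relating tracks $m+1,\ldots,2m$ to tracks $2m+n+1,\ldots,2m+2n$. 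The ``agrees on $I$'' condition is enforced by $\bigcap_{i \in I} \Aut{A}^{k}_{i = \mathsf{const}(r_i)}$, and the ``outputs differ'' condition by $\bigcup_{j=1}^{n} \Aut{A}^{k}_{2m+j \neq 2m+n+j}$. A disequality automaton $\Aut{A}^{k}_{p \neq q}$ is built as the union of two strict-inequality automata $\Aut{A}^{k}_{p<q}$ and $\Aut{A}^{k}_{q<p}$, which are minor variants of the $\le$-comparison automaton already used in Thm.~\ref{thm:arp} (and which can be verified as \FGWNBA of constant size by essentially the same sign-vs.-magnitude reasoning that underlies Lem.~\ref{lem:equality}).

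Closure under intersection and union, together with the projection/composition machinery of Lemmas~\ref{lem:projection}--\ref{lem:composition}, yields a \FGWNBA whose language is nonempty iff a counterexample exists, i.e.\ iff $N \not\models P$. The size analysis is standard: the two copies of $\Aut{A}_N$ together contribute $2^{\BigO{\measure{N}}}$ states, the constant automata on tracks $m{+}1,\ldots,2m$ and at the positions in $I$ contribute $2^{\BigO{\measure{\bs{r}} + \measure{I}}}$, and the $n$ disequality gadgets contribute only a constant factor per $j$. The product construction multiplies these, giving the claimed bound $2^{\BigO{\measure{N} + \measure{P}}}$.

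The main obstacle is exactly the blow-up of $N(\bs{r})$ alluded to above: the symbolic parallel-copy trick is what keeps the construction within $2^{\BigO{\measure{N} + \measure{P}}}$, rather than the potentially larger $2^{\measure{N(\bs{r})}}$ incurred by hard-coding the second output as constants. A secondary but essential technical point is expressing disequality between two real-valued tracks without complementing a \FGWNBA; the reduction to a pair of strict-inequality automata built from the machinery already used for Thm.~\ref{thm:arp} handles this cleanly.
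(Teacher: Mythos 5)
Your construction is essentially the paper's own proof: two parallel copies of $\Aut{A}_N$ on a doubled track set, with the second input pinned to $\bs{r}$ by constant automata, agreement on $I$ enforced track-wise, and output disequality realised as $\Aut{A}^k_{p<q} \cup \Aut{A}^k_{q<p}$ precisely because \FGWNBA are not closed under complement. The only deviations are cosmetic (a different track ordering, and using $\Aut{A}^k_{i=\mathsf{const}(r_i)}$ for $i\in I$ where the paper uses a track-equality automaton against the pinned copy), and both of your highlighted key points -- avoiding the blow-up from hard-coding $N(\bs{r})$ and avoiding complementation -- match the paper's reasoning.
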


\begin{proof}
We need an auxiliary automaton $\Aut{A}^k_{i\ne j}$ which accepts a $k$-track word iff its $i$-th and $j$-th track
represent different numbers. Note that we cannot simply complement $\Aut{A}^k_{i=j}$ from Lemma~\ref{lem:equality} since
weak NBA, let alone \FGWNBA, are not closed under complement. However, it is easy to construct $\Aut{A}^k_{i\ne j}$ directly,
or as $\Aut{A}^k_{i < j} \cup \Aut{A}^k_{j < i}$ using the \FGWNBA from Lemma~\ref{lem:smaller}.
Let $\Aut{A}_N$ be the $l$-track \FGWNBA with $k=m+n$ recognising $N$'s input-output relation from Thm.~\ref{thm:dnn}. 
To construct $\Aut{A}^{N,P}_{\mathsf{msr}}$ we use two copies that work in parallel, computing $N$'s output on some 
$\bs{x}$ and on $\bs{r}$, checking whether the inputs agree on the dimensions in $I$ and whether their outputs disagree. 
Define $\Aut{A}^{N,P}_{\mathsf{msr}}$ as 
$
(\Aut{A}_N \bowtie \Aut{A}^l_{\mathsf{wf}}) \cap (\Aut{A}^l_{\mathsf{wf}} \bowtie \Aut{A}_N) \cap
(\bigcap_{i=1}^m \Aut{A}^{2l}_{l+i=\mathsf{const}(r_i)}) \cap
(\bigcap_{i\in I} \Aut{A}^l_{l+i=i}) \cap 
(\bigcup_{i=1}^n \Aut{A}^{2l}_{m+i \ne l+m+i})$.
The size of this automaton is determined by the intersection of $\Aut{A}_N$ and automata $\Aut{A}^{2l}_{l+i=\mathsf{const}(r_i)}$ parametrized by parts of $I$.
Take a $2l$-track word $w$. Define $\bs{x}$ as $(\dec(w_1),\ldots,\dec(w_m))$ and define $\bs{x'}$ as $(\dec(w_{l+1}),\ldots,$ $\dec(w_{l+m}))$. Define $\bs{y}$ as $(\dec(w_{m+1}),\ldots,\dec(w_{m+n}))$ and define $\bs{y'}$ as $(\dec(w_{l+m+1},\ldots,\dec(w_{2l}))$. Then 
$w \in R(\Aut{A}^{N,P}_{\mathsf{msr}})$ iff $\bs{x'} = \bs{r}$, 
$x_i = r_i$ for all $i \in I$, and $N(\bs{x}) = \bs{y} \ne \bs{y'} = N(\bs{x'}) = N(\bs{r})$, i.e.\ if and only if 
$\bs{x}$ witnesses the fact that $N \not\models P$. \qed
\end{proof}

In practice, the dimensions in $I$ are not explicitly given, but only a number $l \le m$ is given with the proviso that
a set $I$ of input dimensions should be found s.t.\ $|I|=l$ and this set provides a minimum sufficient reason for the
classification of $\bs{r}$. Clearly, by invoking Thm.~\ref{thm:minsuffreason}, at most $\binom{m}{l}$ 
times a counterexample can be found using successive emptiness checks. It remains to be seen whether this can be improved,
for instance by not enumerating all sets $I$ in a brute-force way but to construct one from smaller ones for instance.

\section{Discussion and Outlook}
\label{sec:outlook}
% !TEX root =  ../main.tex

We presented an automata-theoretic framework that can be used to address a broad range of analysis tasks on DNN. The core 
result (Thm.~\ref{thm:dnn}) tranforms a DNN $N$ into an eventually-always weak B\"uchi automaton of exponential size that 
exactly captures (word encodings) of the input-output pairs defined by $N$. Our key observations (Thms.~\ref{thm:arp},
\ref{thm:orp} and \ref{thm:minsuffreason}) are that different particular verification and interpretation problems can
be reduced to emptiness checks for these automata.

The approach presented in Sect. \ref{sec:translation} is conceptual rather than practical. In order
to obtain practically useful automata-based tools for DNN analysis, further work is needed.
 
The exposition here is done w.r.t.\ to a particular neural network model. Hence, further work  consists
of identifying other classes of NN which can be translated similarly into finite-state automata, including special cases that
lead to more efficient translations. 
It also remains to be seen whether the tools presented here can be used meaningfully in the analysis of DNN subcomponents, l
ike a subset of subsequent layers, of a DNN only. 
Layerwise verfication procedures, like interval propagation \cite{LiLYCHZ19_intervalprop} for instance, 
have been shown to be useful in DNN verification in general. 
%We have mentioned, rather informally, the fact that 
The use of NFA and finite words, instead of \FGWNBA and infinite words, 
constitutes an abstraction of a DNN's behaviour in the form of a function of type $\Reals^m \to \Reals^n$ to functions on
some subset. For instance, when cutting down all \FGWNBA to accept immediately rather than read dot symbols, we would obtain
NFA over $\{+,-,0,1\}$ that approximate a DNN's behaviour as a function of type $\Ints^m \to \Ints^n$. We aim to investigate 
this idea more formally, making use of a well-developed theory of abstraction and refinement 
\cite{CoCo:abstract,journals/jacm/ClarkeGJLV03}, with the aim of acquiring a better understanding of the possibilities to trade 
precision for efficiency in DNN analysis.

%\item Introducing a notion of translation from a DNN to finite automata, which only approximately captures the 
%	input-output behaviour of the DNN but is of a compact size.
%	It is to be expected, that the usefulness of approximate translations will be application and model specific.
%\end{itemize} 

Besides, future research should focus on the identification of analysis problems for which the automata-theoretic framework is
genuinely superior compared to other techniques, as one obtains, in the form of the automaton $\Aut{A}_N$, a finite representation 
of the entire input-output behaviour of $N$. This may include transferring the comparison of two DNN $N_1$ and $N_2$ to their
respective automata representations $\Aut{A}_{N_1}$ and $\Aut{A}_{N_2}$. We can compare the behaviour of $N_1$ and $N_2$ by 
investigating, for instance, the intersection of $R(A_{N_1})$ and $R(A_{N_2})$, or their symmetric difference (which will in 
general only be definable by an NBA rather than a \FGWNBA), to obtain notions of diverging behaviour or of equivalence between DNN.

\subsubsection*{Acknowledgements.}
We would like to thank R\"udiger Ehlers for fruitful discussions on this topic and helpful comments on
an earlier draft of the paper.

\bibliographystyle{splncs04}
\bibliography{refs}

\newpage
\appendix 
\section{Proofs}
\label{sec:proofs}
% !TEX root =  ../main.tex

%\subsubsection*{Proposition \ref{prop:Boolean_closure}.}
%For \FGWNBA $\Aut{A}_i = (Q_i,\Sigma,q_0^i,\delta_i,F_i)$ ($i \in \{1,2\})$ is it easily checked that 
%$(Q_1 \times Q_2, \Sigma, (q_0^1,q_0^2), \delta, F)$ with 
%$((q_1,q_2),a,(p_1,p_2)) \in \delta$ iff $(q_i,a,p_i) \in \delta_i$ for $i \in \{1,2\}$, 
%and $F := F_1 \times F_2$, resp.\ $F := F_1 \times Q_2 \cup Q_1 \times F_2$, is a \FGWNBA recognising 
%$L(\Aut{A}_1) \cap L(\Aut{A}_2)$, resp.\ $L(\Aut{A}_1) \cup L(\Aut{A}_2)$. \qed

\subsubsection*{Lemma \ref{lem:projection}.}
This is a standard construction that applies the projection pointwise to the tuples in each transition: if $\pi = (i_1,\ldots,i_n)$
then $\proj{(\Aut{A})}{\pi}$ is obtained from $\Aut{A}$ by replacing each transition $(q,[a_1,\ldots,a_k],p)$ by
$(q,[a_{i_1},\ldots,a_{i_n}],p)$. As a consequence, an accepting run of $\Aut{A}$ on $(w_1,\ldots,w_k) \in (\Sigma^k)^\omega$ 
is an accepting run of $\proj{(\Aut{A})}{\pi}$ on $(w_{i_1},\ldots,w_{i_n}) \in (\Sigma^n)^\omega$ and vice-versa. \qed

\subsubsection*{Lemma \ref{lem:join}.}
We use a product construction again. Let $\Aut{A}_i = (Q_i,\Sigma^{k_i},$ $q_0^i,\delta_i,F_i)$
for $i \in \{1,2\}$. Define $\Aut{A}_1 \times \Aut{A}_2$ as 
$(Q_1 \times Q_2, \Sigma^{k_1+k_2}, (q_0^1,q_0^2), \delta, F_1 \times F_2)$ with
\begin{align*}
	&\big((q_1,q_2),[a_1,\ldots,a_{k_1},b_1,\ldots,b_{k_2}],(p_1,p_2)\big) \in \delta \quad \text{iff} \\
	&\quad (q_1,[a_1,\ldots,a_{k_1}],p_1) \in \delta_1 \enspace \text{and} \enspace (q_1,[b_1,\ldots,b_{k_2}],p_2) \in \delta_2\ .
\end{align*}
Consequently, when $q^1_0,q^1_1,\ldots$ and $q^2_0,q^2_1,\ldots$ are accepting runs of $\Aut{A}_1$ on $(w_1,\ldots,$ $w_{k_1})$,
and of $\Aut{A}_2$ on $(v_1,\ldots,v_{k_2})$ respectively, then $(q^1_0,q^2_0),(q^1_1,q^2_1),\ldots$ 
is an accepting run of $\Aut{A}_1 \times \Aut{A}_2$ on $(w_1,\ldots,w_{k_1},v_1,\ldots,v_{k_2})$ and vice-versa. \qed

\begin{figure}[t]
	\begin{center}
		\begin{tikzpicture}[semithick,initial text={\normalsize$\Aut{A}^2_{1=2}$}, every state/.style={inner sep=2pt, minimum size=4mm},font=\scriptsize,
			node distance=22mm]
			\node[state,initial,initial where=above] (s0)                     {$q_0$};
			\node[state]                             (s2) [right of=s0]       {$q_1$};
			\node[state]                             (u1) [above right of=s2] {$q_2$};
			\node[state,accepting]                   (u2) [right of=u1] {$q_3$};
			\node[state]                             (d1) [below right of=s2] {$q_5$};
			\node[state,accepting]                   (d2) [right of=d1] {$q_6$};
			\node[state,accepting]                   (s3) [right of=s2]       {$q_4$};
			\node[state]                             (n2) [left of=s0]  {$q_7$};
			\node[state,accepting]                   (n3) [left of=n2]        {$q_8$};
			
			\path[->] (s0) edge              node [below,pos=.4]            {$\twos{+}{+}\!\!,\!\!\twos{-}{-}$} (s2)
			edge              node [below,pos=.4]        {$\twos{-}{+}\!\!,\!\!\twos{+}{-}$} (n2)
			(s2) edge [loop above] node [above,xshift=-1mm]                  {$\twos{0}{0}\!\!,\!\!\twos{1}{1}$} ()
			edge              node [above left=-1mm,pos=.8] {$\twos{0}{1}$} (u1)
			edge              node [below left=-1mm] {$\twos{1}{0}$} (d1)
			edge              node [above]                  {$\twos{.}{.}$} (s3)
			(s3) edge [loop right] node [right]                  {$\twos{0}{0}\!\!,\!\!\twos{1}{1}$} ()
			edge              node [above left=-1mm,pos=.3]       {$\twos{0}{1}$} (u2)
			edge              node [below left=-1mm,pos=.3]       {$\twos{1}{0}$} (d2)
			(u1) edge              node [above]                  {$\twos{.}{.}$} (u2)
			edge [loop above] node [right]                  {$\twos{1}{0}$} ()
			(u2) edge [loop above] node [right]                  {$\twos{1}{0}$} ()               
			(d1) edge              node [below]                  {$\twos{.}{.}$} (d2)
			edge [loop below] node [right]                  {$\twos{0}{1}$} ()
			(d2) edge [loop below] node [right]                  {$\twos{0}{1}$} ()               
			(n2) edge [loop below] node [below]                  {$\twos{0}{0}$} ()
			edge              node [below]                  {$\twos{.}{.}$} (n3)
			(n3) edge [loop below] node [below]                  {$\twos{0}{0}$} ();
		\end{tikzpicture}
	\end{center}
	\caption{\FGWNBA that recognises the binary equality relation.}
	\label{fig:equality}
\end{figure}
\subsubsection*{Lemma \ref{lem:equality}.}
$\Aut{A}^2_{1=2}$ is shown in Fig.~\ref{fig:equality}. Note that any $w \in \WF{\Sigma}{2}$ with $w_1 = w_2$ is accepted
via the run that simply moves horizontally right from the initial state. The transitions leading up or down are used to capture
encodings of the same number ending in $10^\omega$, resp.\ $01^\omega$ .The runs leading to the left from the initial state cover the special case of the positive and
negative representation of $0$. To obtain $\Aut{A}^k_{i=j}$ for arbitrary $k,i,j$, it suffices to extend the transition labels
in the automaton of Fig.~\ref{fig:equality} to contain arbitrary bits in positions other than $i$ and $j$. \qed

\subsubsection*{Lemma \ref{lem:composition}.}
$\Aut{A}_1 \circ_k \Aut{A}_2$ can be obtained as
\begin{displaymath}
	\proj{\big( (\Aut{A}_1 \times \Aut{A}_{\mathsf{wf}}^{k_2}) \cap (\Aut{A}_{\mathsf{wf}}^{k_1} \times \Aut{A}_2) \cap 
		\bigcap\limits_{i=1}^{k} \Aut{A}^{k_1+k_2}_{k_1-k+i=k_1+i}\big)}{1,\ldots,k_1-k,k_1+k+1,\ldots,k_1+k_2} \ .
\end{displaymath}
The inner part checks simultaneously whether a given $(k_1+k_2)$-track word is s.t.\ the first $k_1$ tracks are recognised
by $\Aut{A}_1$, the last $k_2$ tracks are recognised by $\Aut{A}_2$, and that the output tracks of the former represent the same
numbers as the input tracks of the latter. The projection is then used to delete all tracks apart from the inputs of the former
and the outputs of the latter. \qed

\subsection{Translating DNN into \FGWNBA}

\begin{figure}[t]
	\begin{center}
		\begin{tikzpicture}[semithick,initial text={\normalsize$\Aut{H}_{3=1{+}2}$}, every state/.style={inner sep=2pt, minimum size=4mm},font=\tiny,
			node distance=22mm]
			%  \node[state]                             (u1)                     {$o$};
			\node[state]                             (u2)                     {};
			\node[state]                             (u3) [right of=u2]       {};
			%  \node[state]                             (u4) [right of=u3]       {$u$};
			%  \node[state,accepting]                   (d1) [below of=u1, node distance=20mm] {$o'$};
			\node[state,accepting]                   (d2) [right of=u3]       {};
			\node[state,accepting]                   (d3) [right of=d2]       {};
			%  \node[state,accepting]                   (d4) [right of=d3]       {$u'$};
			\node[state,initial,initial where=above] (i)  [left of=u2]        {};
			%  \node[state]                             (z0) [left of=i]         {};
			%  \node[state,accepting]                   (z1) [below of=z0]        {};
			\node                                    (g0) [below of=i, node distance=16mm]        {$\cdots$};
			\node                                    (g1) [left of=i, node distance=20mm]         {$\cdots$};
			
			\path[->] (i)  edge              node [above,pos=.5] {$\threes{+}{+}{+}\!\!,\!\!\threes{-}{-}{-}$}       (u2)
			edge [shorten >=1mm]             node [left]          {$\threes{+}{-}{+}\!\!,\!\!\threes{-}{+}{-}$}       (g0)
			edge [shorten >=1mm]             node [above]          {$\threes{-}{+}{-}\!\!,\!\!\threes{+}{-}{+}$}       (g1)
			(u2) edge [loop above] node [above] {$\threes{0}{0}{0}\!\!,\!\!\threes{0}{1}{1}\!\!,\!\!\threes{1}{0}{1}$} ()
			edge [bend left=40]  node [above] {$\threes{0}{0}{1}$}                                                   (u3)
			%                 edge [bend right] node [below]  {$\threes{.}{.}{.}$}                                                   (d2)
			(u3) edge [loop below] node [below] {$\threes{0}{1}{0}\!\!,\!\!\threes{1}{0}{0}\!\!,\!\!\threes{1}{1}{1}$} ()
			edge [bend left=40]  node [above] {$\threes{1}{1}{0}$}                                                   (u2)
			%                 edge              node [right] {$\threes{.}{.}{.}$}                                                   (d3)
			(d2) edge [loop above] node [above] {$\threes{0}{0}{0}\!\!,\!\!\threes{0}{1}{1}\!\!,\!\!\threes{1}{0}{1}$} ()
			edge [bend left=40]  node [above] {$\threes{0}{0}{1}$}                                                   (d3)
			(d3) edge [loop below] node [below] {$\threes{0}{1}{0}\!\!,\!\!\threes{1}{0}{0}\!\!,\!\!\threes{1}{1}{1}$} ()
			edge [bend left=40]  node [above] {$\threes{1}{1}{0}$}                                                   (d2);
			
			\path[->,draw,rounded corners] (u2) |- node [left,pos=.25] {$\threes{.}{.}{.}$} ++(2,-1.5) -| (d2); 
			\path[->,draw,rounded corners] (u3) |- node [right,pos=.25] {$\threes{.}{.}{.}$} ++(2,1.5) -| (d3); 
		\end{tikzpicture}
	\end{center}
	\caption{\FGWNBA that recognises a subset of the ternary addition relation.}
	\label{fig:addition}
\end{figure}
\subsubsection*{Lemma \ref{lem:help}.\ref{lem:help;add}.}
Fig.~\ref{fig:addition} shows a sketch of an auxiliary automaton $\Aut{H}_{3=\mathsf{add}(1,2)}$ which almost recognises the ternary addition 
relation $R^3_{\mathsf{add}} := \{ w \in \WF{\Sigma}{3} \mid \dec(w_3) = \dec(w_1) + \dec(w_2) \}$. 
The automaton $\Aut{H}_{3=\mathsf{add}(1,2)}$ is obtained 
as the disjoint union of three independent components that a run enters depending on the
first input symbol $[s_1,s_2,s_3]$ representing the signs of the two summands and the sum. The component handling the case of  
$[+,+,+]$ (and $[-,-,-]$ by symmetry because $x_1 + x_2 = x_3$ iff $-x_1 - x_2 = -x_3$) is shown on the right. It contains the typical 
construction for bitwise addition from highest bits to lower ones: the automaton remembers whether the next read symbol 
$(a_1,a_2,b)$ is supposed to produce a carry bit (1) or not (0). It also checks, as usual, that eventually $[.,.,.]$ is being read.

The other two components are not shown in Fig.~\ref{fig:addition} for the purpose of brevity. Both are structurally
very similar to the 4-state component just described. For example, the component entered upon reading $[+,-,+]$ or, by symmetry,
$[-,+,-]$, can be obtained from this one simply by uniformly swapping components in the letters so that $[a_1,a_2,b]$ becomes
$[a_2,b,a_1]$. Note that $x_1 - x_2 = x_3$ iff $x_2 + x_3 = x_1$. Hence, by swapping the roles of $x_1$ (which is now the sum) and 
$x_3$ (which is now a summand) one can re-use the mechanisms of adding two positive summands in this case. Likewise, the component
entered upon reading $[+,-,-]$ or $[-,+,+]$ is obtained from the one described first by swapping the symbols in the second and 
third track.

$\Aut{H}_{3=\mathsf{add}(1,2)}$ only recognises a subset of $R^3_{\mathsf{add}}$. For
example, 
\begin{displaymath}
	\threes{+}{+}{+} \threes{0}{0}{1} \threes{0}{0}{0} \threes{1}{1}{0} \threes{1}{1}{0} \threes{.}{.}{.} \threes{1}{1}{0}^\omega
\end{displaymath}
is not recognised, even though it formalises the correct addition of $4+4=8$. Instead, the following
holds: for every $x,y \in \Reals$ there is a $w \in \WF{\Sigma}{3} \cap R(\Aut{H}_{3=\mathsf{add}(1,2)})$ s.t.\ $\dec(w)_1 = x$, 
$\dec(w)_2 = y$ and $\dec(w_3) = x+y$. To capture the entire $R^3_{\mathsf{add}}$ we can use an equality automaton (here seen
as a transducer with 1 input and 1 output) to extend the output track to all representations of the same number: 
$\Aut{A}^3_{3=\mathsf{add}(1,2)} := \Aut{H}_{3=\mathsf{add}(1,2)} \circ \Aut{A}^2_{1=2}$. Note that $\circ$ here is relation composition,
not automaton intersection.

This covers the case of $k=2$ and summation of two numbers only. Any $\Aut{A}^k_{j=i_1{+}i_2}$ for other values of $i_1,i_2,j,k$ is 
easily obtained from $\Aut{A}^3_{3=\mathsf{add}(1,2)}$ by inserting and rearranging tracks, for instance using joins and projections. 
Automata formalising the addition of multiple summands can be obtained by breaking it down into sums of two values each:
\begin{displaymath}
	\Aut{A}^{k}_{j=\mathsf{add}(i_1,\ldots,i_n)} := \proj{\big(\Aut{A}^{k+1}_{k+1=\mathsf{add}(i_1,\ldots,i_{n-1})} \cap \Aut{A}^{k+1}_{j=\mathsf{add}(k{+}1,i_n)}\big)}{1,\ldots,k}
	%\Aut{A}^{k}_{\mathsf{add}} := \big((\ldots(\Aut{H}^2_{\mathsf{add}} \bowtie_{2=1,3=2} \Aut{H}^2_{\mathsf{add}}) \bowtie_{3=1,4=2} \ldots) \bowtie_{k{-}1=1,k=2} \Aut{H}^2_{\mathsf{add}}\big) \bowtie_{k{+}1=1} \Aut{A}_{\mathsf{eq}}
\end{displaymath}  
The size estimation results from the $k$-fold product constructions underlying the join operations on automata of constant size 
for constant numbers of tracks. \qed

\subsubsection*{Lemma \ref{lem:help}.\ref{lem:help;relu}.}
Note that the ReLU operation is the equality relation on positive inputs and the constant 0 on negative inputs. It is easy to
modify $\Aut{A}^k_{j=i}$ from Lemma~\ref{lem:equality} to obtain $\Aut{A}^k_{j=\mathsf{relu}(i)}$; here we briefly discuss how 
to do so in the case of $k=2$, $i=1$, $j=2$, as this can be matched directly to the picture in Fig.~\ref{fig:equality}: the 
transitions from the initial state $q_0$ under $[-,-]$ and $[-,+]$ are removed, and two new states are added to allow it
to also accept words $w \in \WF{\Sigma}{2}$ s.t.\ $w_1$ is of the form $-\{0,1\}^*.\{0,1\}^\omega$ and $w_2$ is of the form
$\{+,-\}0^*.0^\omega$. 
%\begin{center}
%\begin{tikzpicture}[semithick,initial text={\normalsize$\Aut{A}_{\relu}$}, every state/.style={inner sep=2pt, minimum size=4mm},font=\scriptsize,
%                    node distance=19mm]
%  \node[state]           (s0)                     {$q_0$};
%  \node                                    (s2) [below right of=s0]       {$\dots$};
%   \node                                    (s3) [above right of=s0]       {$\dots$};
%  \node[state]                             (l2) [left of=s0]  {$q_9$};
%  \node[state,accepting]                   (l3) [left of=l2]        {$q_{10}$};
%
%  \path[->] (s0) edge              node [above]        {$\twos{-}{+}\!\!,\!\!\twos{-}{-}$} (l2)
%                 edge              node [above right,pos=.4]           {$\twos{+}{+}$} (s2)
%                 edge              node [above left,pos=.4]           {$\twos{+}{-}$} (s3)
%            (l2) edge [loop above] node [above]                  {$\twos{0}{0}\!\!,\!\!\twos{1}{0}$} ()
%                 edge              node [above]                  {$\twos{.}{.}$} (l3)
%            (l3) edge [loop above] node [above]                  {$\twos{0}{0}\!\!,\!\!\twos{1}{0}$} ();
%\end{tikzpicture}
%\end{center}
%Then $\Aut{A}^2_{2=\mathsf{relu}(1)}$ accepts a word $w \in \WF{\Sigma}{2}$ if either $\dec(w_1) \ge 0$ and 
%$\dec(w_1) = \dec(w_2)$, or $\dec(w_1) \le 0$ and $\dec(w_2) = 0$, i.e.\ if $\dec(w_2) = \max \{ \dec(w_1),0 \}$. 
The cases of $k,i,j$ having different values are then easily obtained by extending the transitions with arbitrary
bits in the $k-2$ components other than $i$ and $j$. 
\qed

\subsubsection*{Lemma \ref{lem:help}.\ref{lem:help;const_mult}.}
First we consider integer values of $c$. The case of $c=1$ is just an instance of the equality automaton, and the case of $c=0$ is easy to construct
similarly. So suppose that $c=2$. Note that
$\Aut{A}^k_{j=\mathsf{mult}(2,i)} := \proj{(\Aut{A}^{k+1}_{k{+}1=i} \cap \Aut{A}^{k+1}_{j=\mathsf{add}(i,k+1)})}{(1,\ldots,k)}$ provides the desired functionality of
checking whether the $j$-th track contains double the value of the $i$-th track. Note that $\size{\Aut{A}^k_{j=\mathsf{mult}(2,i)}} = \BigO{1}$.

Now let $c \geq 2$ be an integer value. Let $m$ be minimal and $b_0,\ldots,b_{m-1}$ be chosen uniquely s.t.\ $c = \sum_{i=0}^{m-1} b_i \cdot 2^i$. 
Let $i_1,\ldots,i_\ell$ be the sequence of indices $i$ s.t.\ $b_i=1$. Then $\Aut{A}^k_{j=\mathsf{mult}(c,i)} :=$
\begin{displaymath}
	\proj{\Big(\Aut{A}^{k+m}_{k+1=\mathsf{mult}(2,i)} \cap \big(\bigcap\limits_{i=2}^{m-1} \Aut{A}^{k+m}_{k+i=\mathsf{mult}(2,k+i-1)}\big) \cap \Aut{A}^{k+m}_{j=\mathsf{add}(i_1,\ldots,i_\ell)}\Big)}{1,\ldots,k}
\end{displaymath}
recognises the relation stated in the lemma by essentially computing successive values $2^i \cdot x$ in the $i$-th
additional track (that gets projected out afterwards) for a value $x$ in the $i$-th input track, and then adding up these additional tracks 
that correspond to multiples of $x$ whose bit in the binary representation of $x$ is set. Note that $\size{\Aut{A}^k_{j=\mathsf{mult}(c,i)}} = 2^{\BigO{\log c}} = 2^{\BigO{\measure{c}}}$.

Multiplication with a negative integer constant $c$ can be realised by taking $\Aut{A}^k_{j=\mathsf{mult}(-c,i)}$ for the positive constant 
$-c$, and then changing the transitions out of its initial state by swapping the labels $+$ and $-$ in the $i$-th component. 

So suppose now that $c$ is genuinely rational. Take $n \in \Ints$, $1 \le d \in \Nats$ 
to be minimal such that $\frac{n}{d} = c$. Note that, for given $x,y \in \Reals$, we have $y = c \cdot x$ iff $n\cdot x = d \cdot y$.
Hence,
\begin{displaymath}
	\Aut{A}^k_{j=\mathsf{mult}(c,i)} := \proj{\big(\Aut{A}^{k+2}_{k+1=\mathsf{mult}(n,i)} \cap \Aut{A}^{k+2}_{k+2=\mathsf{mult}(d,j)} \cap \Aut{A}^{k+2}_{k+2=k+1}\big)}{1,\ldots,k}
\end{displaymath}
recognises multiplication with the constant $c$.

Remember that $\measure{c} = \BigO{\log |n| + \log d}$, and that the automata in this construction are of size exponential in 
the number of their tracks which is bounded by $\log |n|$, resp.\ $\log d$. Hence, $\size{\Aut{A}_{\mathsf{mult}}^c} = 2^{\BigO{\measure{c}}}$. 
\qed

\subsubsection*{Lemma \ref{lem:help}.\ref{lem:help;const}.}
The cases of $c=0$ and $c=1$ are easily constructed specifically. For instance, if $c=1$ then $\Aut{A}^k_{i=\mathsf{const}(1)}$ only has to check 
that the $i$-th track of an input contains $+0^*1.0^\omega$ or $+0^*.1^\omega$. So suppose that $0 \ne c \ne 1$. Then $c^{-1} \in \Rats$, 
$c^{-1} \ne 1$ and $\Aut{A}^k_{i=\mathsf{const}(c)}$ can be obtained as $\proj{(\Aut{A}^{k+1}_{k+1=\mathsf{mult}(c^{-1},i)} \cap \Aut{A}^{k+1}_{k+1=\mathsf{const}(1)})}{1,\ldots,k}$. Note that $\measure{c^{-1}} = \measure{c}$. \qed

\subsection{Use Cases: Analysing DNN using \FGWNBA}

\begin{figure}
	\centering
	\begin{tikzpicture}[semithick,initial text={\normalsize$\Aut{A}^2_{1<2}$}, every state/.style={inner sep=2pt, minimum size=4mm},font=\scriptsize,
		node distance=19mm]
		\node[state,initial,initial where=left]  (s0)                     {};
		\node[state]                             (s1) [above right of=s0]       {};
		\node[state]                             (s2) [below right of=s0] {};
		\node[state]                             (s3) [below right of=s1] {};
		\node[state]                             (s4) [above right of=s3] {};
		\node[state]                             (s5) [below right of=s3] {};
		\node[state,accepting]                   (s6) [below right of=s4]        {};
		
		\path[->] 
		(s0) edge              node[below left=-1mm]        {$\twos{-}{-}$} (s2)
		(s0) edge              node[above left=-1mm]        {$\twos{+}{+}$} (s1)
		(s1) edge              node[above]        {$\twos{0}{1}$} (s4)
		(s1) edge [loop above] node [left]                  {$\twos{0}{0}\!\!,\!\!\twos{1}{1}\!\!,\!\!\twos{.}{.}$} ()
		(s2) edge              node[above]        {$\twos{1}{0}$} (s5)
		(s2) edge [loop below] node [left]                  {$\twos{0}{0}\!\!\twos{1}{1}\!\!\twos{.}{.}$} ()
		(s0) edge              node[above]        {$\twos{-}{+}$} (s3)
		(s4) edge [bend left]  node[above right=-1mm]        {$\twos{0}{0}\!\!,\!\!\twos{0}{1}\!\!,\!\!\twos{1}{1}$} (s6)
		(s4) edge [loop above] node [right]                  {$\twos{1}{0}\!\!,\!\!\twos{.}{.}$} ()
		(s5) edge [bend right] node[below right=-1mm]        {$\twos{0}{0}\!\!,\!\!\twos{1}{0}\!\!,\!\!\twos{1}{1}$} (s6)
		(s5) edge [loop below] node [right]                  {$\twos{0}{1}\!\!,\!\!\twos{.}{.}$} ()
		(s3) edge              node[above]        {$\twos{0}{1}\!\!,\!\!\twos{1}{0}\!\!,\!\!\twos{1}{1}$} (s6)
		(s3) edge [loop above] node [left] {$\twos{0}{0}$} node [right] {$\twos{.}{.}$} ()
		(s6) edge [loop right] node [right]                  {$\twos{0}{0}\!\!,\!\!\twos{0}{1}\!\!,\!\!\twos{1}{0}\!\!,\!\!\twos{1}{1}\!\!,\!\!\twos{.}{.}$} ();
	\end{tikzpicture}
	\caption{Automaton recognizing that the number encoded on the first track is smaller than the number encoded
		on the second track.}
	\label{fig:smalleraut}
\end{figure}
To specify automata recognizing the validity ARPs, we need to formalise comparison of vector components using \FGWNBA.
\begin{lemma}
	\label{sec:verification;lem:smaller}
	\label{lem:smaller}
	Let $k \geq 2$, $1 \le i,j \le k$. There are \FGWNBA $\Aut{A}^k_{i<j}$ and $\Aut{A}^k_{i\le j}$ of size $\BigO{1}$ s.t.\ for all 
	$w \in \WF{\Sigma}{k}$ we have $w \in R(\Aut{A}_<)$ iff $\dec(w_i) < \dec(w_j)$, resp.\ $\dec(w_i) \le \dec(w_j)$.    
\end{lemma}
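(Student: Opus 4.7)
The plan is to construct the two-track automaton $\Aut{A}^2_{1<2}$ explicitly (along the lines of Fig.~\ref{fig:smalleraut}) and then to lift it to arbitrary $k,i,j$ by padding the transition labels. The $\le$ version is then obtained from the strict version via $\Aut{A}^k_{i\le j} := \Aut{A}^k_{i<j} \cup \Aut{A}^k_{i=j}$, invoking Lemma~\ref{lem:equality} and closure under union; this adds a constant number of states.

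For the base case, I would do a case split on the signs of the two numbers. If the signs are $\twos{+}{-}$, the first is never strictly smaller (except in the all-zero corner, where the two are equal), so no outgoing transition from the initial state is needed. If both signs agree, the automaton advances through a "so far equal" state that loops on pairs $\twos{0}{0}$, $\twos{1}{1}$ and on $\twos{.}{.}$; the moment it reads the decisive pair $\twos{0}{1}$ (for the $++$ branch) or $\twos{1}{0}$ (for the $--$ branch) it enters a "smaller suspected" state. For the $\twos{-}{+}$ branch, the first number is smaller iff at least one of the two tracks is nonzero somewhere, which is enforced by looping on $\twos{0}{0}$ and $\twos{.}{.}$ and moving to the accepting sink on any other pair.

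The subtle point, and the one that forces the extra "suspected" state, is the representation ambiguity $10^\omega = 01^\omega$: after seeing a locally $<$-pattern like $\twos{0}{1}$ in the positive branch, the two numbers could still be equal if the remaining suffixes are $\twos{1}{0}^\omega$ (possibly crossing the dot). Hence the suspected state loops on $\twos{1}{0}$ and on $\twos{.}{.}$ and only commits to the accepting sink $F$ upon reading a pair $\twos{0}{0}$, $\twos{0}{1}$, or $\twos{1}{1}$, which guarantees a genuine difference. The negative branch is symmetric with $\twos{0}{1}$ and $\twos{1}{0}$ swapped. Correctness then follows by verifying that each well-formed two-track word $w$ with $\dec(w_1)<\dec(w_2)$ admits an accepting run, and conversely that any accepting run forces strict inequality; the only nontrivial direction is handled by the suspected-state analysis above.

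Finally I would check that the resulting automaton is indeed an \FGWNBA: the accepting sink $F = \{s_6\}$ is a single self-looping SCC, every remaining SCC is a singleton non-accepting state (or the initial state, which has no incoming transitions), and every transition into $F$ is one-way, so no path ever leaves $F$. To generalize from $(k,i,j) = (2,1,2)$ to arbitrary $k \ge 2$ and $1 \le i,j \le k$, I would replace each label $\twos{a}{b}$ by the set of $k$-tuples that carry $a$ on track $i$, $b$ on track $j$, and an arbitrary symbol from $\Sigma$ on the remaining tracks (using standard well-formedness in the form of intersection with $\Aut{A}^k_{\mathsf{wf}}$ to keep signs, digits and dots in their correct positions). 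The state set is unchanged, so the size bound is $\BigO{1}$. The main obstacle is just the careful case analysis around the $10^\omega/01^\omega$ ambiguity; everything else is routine bookkeeping.
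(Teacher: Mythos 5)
Your proposal is correct and follows essentially the same route as the paper: an explicit two-track automaton built by a case split on the sign pair, with an intermediate ``suspected'' state to handle the $10^\omega/01^\omega$ representation ambiguity, generalisation to arbitrary $k,i,j$ by padding transition labels, and $\Aut{A}^k_{i\le j} := \Aut{A}^k_{i<j} \cup \Aut{A}^k_{i=j}$ via Lemma~\ref{lem:equality}. The only cosmetic difference is that you intersect with $\Aut{A}^k_{\mathsf{wf}}$ to enforce well-formedness, whereas the paper leaves the automaton permissive on non-well-formed words and relies on the surrounding constructions to restrict to $\WF{\Sigma}{k}$; both are fine and preserve the $\BigO{1}$ size bound.
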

\begin{proof}
	We show how $\Aut{A}^k_{i<j}$ can be built for $k=2$, $i=1$ and $j=2$ in Fig.\ref{fig:smalleraut}.
	It checks that the word on the second track encodes a greater number, depending on the sign, in the 
	obvious way. If the preceding signs are $[-,+]$ then it only needs to check that not both tracks encode $0$.
	If they are $[+,+]$ then the automaton needs to verify that the tracks differ at some point, and that, at the first
	point where they differ, the bit in the second track is set and the bit in the first track is not set. Moreover, the tracks
	can not continue with all following bits set in the first track, but none in the second, because then the numbers encoded in
	the tracks would be the same. Again, by padding the
	transition labels accordingly, one can create $\Aut{A}^k_{i<j}$ for arbitrary $k,i,j$.
	$\Aut{A}^k_{i\le j}$ is then simply obtained as $\Aut{A}^k_{i<j} \cup \Aut{A}^k_{i=j}$. All involved automata are of constant
size. \qed
\end{proof}
Note the slight difference in the specification in Lemma~\ref{lem:smaller} compared to the lemmas in Sect.~\ref{sec:translation}.
While the automata constructed there only accept well-formed words, the ones constructed in Lemma~\ref{lem:smaller} also accept
non-well-formed words. It would be easy to restrict the languages of $\Aut{A}^k_{i<j}$ and $\Aut{A}^k_{i\le j}$ to well-formed
words only by doubling the state space. This is, however, not necessary as they will only be used here in conjunction with other
\FGWNBA that ensure well-formedness.

\subsubsection*{Lemma \ref{lem:arp_in}.}
We start by arguing that one can construct a \FGWNBA $\Aut{A}^k_{j=\mathsf{abs}(i)}$ (of constant size) that 
checks whether the $j$-th track in a $k$-track word contains the absolute value of the number encoded in the $i$-th track.
It is easily obtained by swapping two transitions in $\Aut{A}^k_{j=i}$, namely those out of the initial state with labels
$[-,+]$ and $[-,-]$.
%Let $\Aut{A}^k_{j=\mathsf{abs}(i)} :=$\TODO{Evtl. einfacher den Eq-Automaten umzubauen - ML}
%\begin{displaymath}
%\proj{\Big(\Aut{A}^{k+1}_{k+1=\mathsf{const}(0)} \cap 
	%\big((\Aut{A}^{k+1}_{k+1 \le j} \cap \Aut{A}^{k+1}_{j=i}) \cup (\Aut{A}^{k+1}_{j < k+1} \cap \Aut{A}^{k+1}_{j=\mathsf{mult}(-1,i)})\big)%\Big)}{1,\ldots,k}
%\end{displaymath}

$\Aut{A}^{k,P}_{\mathsf{in}}$ can then be built by temporarily using $4m+2$ tracks in addition to the $k$ given ones which are 
checked to contain, respectively, for input values $x_1,\ldots,x_m$ encoded on the first $m$ tracks, the values $-x_1,\ldots,-x_m$, 
then the values $r_1,\ldots,r_m$, then $r_1 - x_1, \ldots, r_m - x_m$, then their absolute values in the next $m$ tracks, the sum 
of these in the next, and the constant $d$ in the last. Using the \FGWNBA from Lemmas~\ref{lem:help}.\ref{lem:help;add}, \ref{lem:help}.\ref{lem:help;const_mult}, 
\ref{lem:help}.\ref{lem:help;const}, 
\ref{lem:smaller}, the correctness of the tracks can be verified as follows. Let $\ell := k+4m+2$. Then $\Aut{A}^{k,P}_{\mathsf{in}}$ 
is defined via
\begin{align*}
	\Aut{A}^{k,P}_{\mathsf{in}} := 
	\Big(&\big(\bigcap\limits_{i=1}^m \Aut{A}^\ell_{k+i=\mathsf{mult}(-1,i)} \cap \Aut{A}^\ell_{k+m+i=\mathsf{const}(r_i)} \cap  
	\Aut{A}^\ell_{k+2m+i=\mathsf{add}(k+i,k+m+i)}  \\ 
	&  \hphantom{\big(\bigcap\limits_{i=1}^m } \cap\Aut{A}^\ell_{k+3m+i=\mathsf{abs}(k+2m+i)}\big) \\
	& \cap \proj{\Aut{A}^\ell_{k+4m+1=\mathsf{add}(k+3m+1,\ldots,k+4m)} \cap \Aut{A}^\ell_{\ell=\mathsf{const}(d)} \cap \Aut{A}^\ell_{k+4m+1 \le \ell}\Big)}{1,\ldots,k}
\end{align*} 
This construction of $\Aut{A}^{k,P}_\text{in}$ makes the size claim obvious: the important parts are the addition, multiplication and
constant automata are exponential in their respective parameters, each a subparamter of $P$. The intersection of all these,
leads to the size of $2^{\BigO{\measure{P}}}$.
\qed

\end{document}